\documentclass[research]{fcs}

\usepackage{bm}
\usepackage{times}
\usepackage{soul}
\usepackage{url}
\usepackage[utf8]{inputenc}
\usepackage{graphicx}
\usepackage{amsmath}
\usepackage{amsthm}
\usepackage{amssymb}
\usepackage{booktabs}
\usepackage{algorithmic}
\usepackage{multirow}
\usepackage{mwe}
\usepackage[ruled,linesnumbered,noend]{algorithm2e}

\usepackage{multirow}
\usepackage[normalem]{ulem}
\useunder{\uline}{\ul}{}
\usepackage{enumitem}
\usepackage{enumerate}
\usepackage{color}
\useunder{\uline}{\ul}{}
\urlstyle{same}

\usepackage{microtype}

\title{How Graph Convolutions Amplify Popularity Bias for Recommendation?}
\author[1]{Jiajia CHEN}
\author[1]{Jiancan WU}
\author[2]{Jiawei CHEN}
\author[3]{Xin XIN}
\author[4]{Yong LI}
\author*[1]{Xiangnan HE}
\address[1]{School of Information Science and Technology, University of Science and Technology of China, Hefei 230026, China}
\address[2]{School of Computer Science and Technology, Zhejiang University, Hangzhou 310058, China}
\address[3]{School of Computer Science and Technology, Shandong University, Qingdao 250100, China}
\address[4]{Department of Electronic Engineering, Tsinghua University, Beijing 100084, China}

\fcssetup{
  received       = {month dd, yyyy},
  accepted       = {month dd, yyyy},
  corr-email     = {xiangnanhe@gmail.com},
}
\begin{abstract}
Graph convolutional networks (GCNs) have  become prevalent in recommender system (RS) due to their superiority in modeling collaborative patterns. Although improving the overall accuracy, GCNs unfortunately amplify popularity bias --- tail items are less likely to be recommended. 
This effect prevents the GCN-based RS from making precise and fair recommendations, decreasing the effectiveness of recommender systems in the long run.

In this paper, we investigate how graph convolutions amplify the popularity bias in RS. Through theoretical analyses, we identify two fundamental factors: (1) with graph convolution (\textit{i.e.,} neighborhood aggregation), popular items exert larger influence than tail items on neighbor users, making the users move towards popular items in the representation space; (2) after multiple times of graph convolution, popular items would affect more high-order neighbors and become more influential. The two points make popular items get closer to almost users and thus being recommended more frequently.
To rectify this, we propose to estimate the amplified effect of popular nodes on each node's representation, and intervene the effect after each graph convolution. Specifically, we adopt clustering to discover highly-influential nodes and estimate the amplification effect of each node, then remove the effect from the node embeddings at each graph convolution layer. Our method is simple and generic --- it can be used in the inference stage to correct existing models rather than training a new model from scratch, and can be applied to various GCN models. We demonstrate our method on two representative GCN backbones LightGCN and UltraGCN, verifying its ability in improving the recommendations of tail items without sacrificing the performance of popular items. Codes are open-sourced \footnote{https://github.com/MEICRS/DAP}.
\end{abstract}
\keywords{Recommendation, Graph Convolution Networks, Popularity Bias}
\begin{document}

\section{Introduction}
\label{intro}
Recommender system (RS) has become a key tool for personalization in today's Web, and also a research hotpot \cite{wu2022surveyaccuracy,chen2023bias}. Recently, methods based on GCN have become prevalent in RS since they can effectively encode collaborative filtering signal \cite{wang2019neural,wu2019neural,wu2019simplifying,he2020lightgcn,liu2021interest}. Though GCNs can improve the overall recommendation accuracy, we find a downside is they amplify the popularity bias --- popular items are more frequently recommended than tail items\footnote{We treat the most 20\% popular items in the dataset as popular items, and the rest are tail items.}, which means the tail items are not fairly treated by the algorithm, and the issue would be exacerbated in practical RSs due to  the user-system feedback loop. 

Figure \ref{tailratio} provides an evidence that GCNs intensify the popularity bias as the number of graph convolution layers increases. The backbone model is LightGCN \cite{he2020lightgcn} using BPR loss \cite{SteffenRendle2009BPRBP}, a representative and competitive GCN-based recommender. With the increasing of graph convolution layers, the overall accuracy metric (\textit{i.e.,} Recall@20) gradually increases, but the ratio of recommending tail items (i.e., TR@20) drops significantly. This means the improvements of performing graph convolutions is at the expense of tail items, pointing to the amplified popularity bias issue of GCNs~\cite{wei2021model}. Although there has been recent work in alleviating the popularity bias in RS~\cite{zheng2021disentangling, wei2021model, zhang2021causal}, they do not reveal how GCN amplifies the bias (\textit{i.e.,} the mechanism) and their solutions are generally applied to RS models rather than tailored for GCNs. 

\begin{figure}  
\centering
    \includegraphics[scale=0.4]{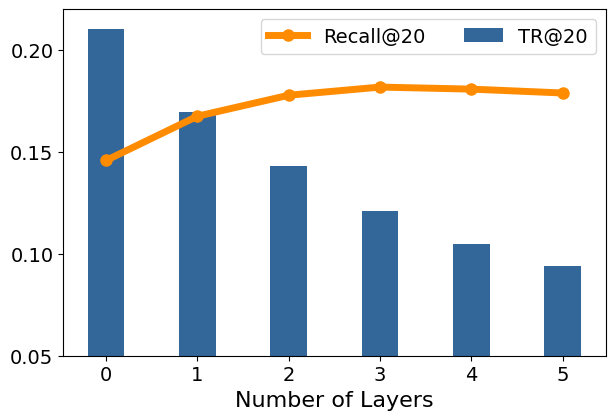}
    \caption{Performance change of LightGCN with different graph convolution layers on Gowalla. Recall@20 and TR@20 stand for the overall recall score and the ratio of tail items in the top-20 recommendation list, respectively.}
    \label{tailratio}
\vspace{-10pt}
\end{figure}

In this work, we aim to answer the unsolved questions "\textit{How do graph convolutions amplify popularity bias in RS?}" and "\textit{How to address the issue with minimum changes on the training progress?}" To reveal the mechanism of GCNs amplifying the bias, we conduct theoretical studies by analyzing how popular items influence other nodes. The main findings are twofold: (1) with graph convolution, popular items have a greater impact than tail items on neighbor users, making the users position closer to popular items in the representation space; (2) after multiple times of graph convolution, popular items become more influential by affecting more high-order neighbor users. The theoretical analyses confirm the inherent defect of GCNs in over-recommending popular items.

To alleviate the popularity bias in RS, two lines of research work have been conducted. The first is intervening the training process to eliminate the bias effect. For example,  propensity-based methods~\cite{liang2016causal} impose lower weights on popular items in training loss and  causal methods~\cite{zheng2021disentangling, wei2021model} model the causal effect of the bias on the model training. Those methods need to retrain the backbone model, making the solutions costly to use in practical RSs especially to correct the already-deployed GCN models. In contrast, another line is revising the inference stage in a post-hoc way. For example, \cite{abdollahpouri2019managing} performs personalized re-ranking on the generated candidates of the RS model to suppress popular items. However, both lines work on the general issue of popularity bias in recommendation, leaving how GCN model suffers from and amplifies the popularity bias untouched.  

Towards the research gap, we propose a new problem to solve --- rectifying the popularity bias of GCN model in the inference stage. Such a solution can be used to revise already-trained GCNs, thus is easier to deploy in practice than training a new model. Given a GCN model, we first cluster the node representations to automatically discover the highly-influential nodes. Then, the amplification of popularity bias for each node within its cluster is estimated based on the prior theoretical analyses. Thereafter, the amplification effect in the node representation can be intervened to control the bias. This post-hoc method can be easily deployed in practice to correct existing GCN models and promote the recommendations of tail items without sacrificing the performance of popular items. To summarize, this work makes the following  contributions: 

\begin{itemize}
\item Providing in-depth theoretical analyses to interpret the popularity bias amplification problem in GCN-based recommenders;  
\item Developing a new method working at each graph convolution layer in the inference stage to correct the popularity bias for GCN;
\item Conducting extensive experiments on three real datasets to demonstrate the effectiveness of our method on LightGCN and UltraGCN backbones. 
\end{itemize}
\section{Preliminaries}

\begin{table}[t]
\caption{Main notations used in the paper.}
\label{table:0}
\begin{tabular}{l|l}
\hline
$\mathcal{U}$, $\mathcal{I}$  & User set, item set \\ \hline
$\mathcal{N}_u$, $\mathcal{N}_i$  & The one-order neighbors of user $u$ or item $i$ \\ \hline
$d_u$, $d_i$  & The degree of user $u$ or item $i$ \\ \hline
$\mathbf{e}_u^{(l)}$, $\mathbf{e}_i^{(l)}$  & \parbox[c]{6.5cm}{The embedding of user $u$ or item $i$ at the $l$-th graph convolution layer}  \\ \hline
$\mathcal{L}^{ui}$  & The individual loss term of an interaction $(u,i)$ \\ \hline
$\mathcal{C}^{(l)}_p$  & \parbox[c]{6.5cm}{A set of nodes in the $p$-th cluster obtained by using Kmeans given the embeddings $\mathbf{E}^{(l)}$}  \\ \hline
$\mathbb{H}_v^{(l)}$  & A set of nodes $\{j\in\mathcal{C}^{(l)}_p, d_j > d_v|v\in\mathcal{C}^{(l)}_p\}$ \\ \hline
$\mathbb{L}_v^{(l)}$ & A set of nodes $\{j\in\mathcal{C}^{(l)}_p, d_j < d_v|v\in\mathcal{C}^{(l)}_p\}$ \\ \hline
$\hat{\pmb{\vartheta}}_{H_v}^{(l)}$ & \parbox[c]{6.5cm}{The pooling representations after normalization of $\mathbb{H}_v^{(l)}$} \\ \hline
$\hat{\pmb{\vartheta}}_{L_v}^{(l)}$ & \parbox[c]{6.5cm}{The pooling representations after normalization of $\mathbb{L}_v^{(l)}$} \\ \hline
\end{tabular}
\end{table}

Suppose that there are a set of users and a set of items $\mathcal{U}=\{u_1, u_2, \cdots, u_M\}$, $\mathcal{I}=\{i_1, i_2, \cdots, i_N\}$ in a dataset $D$. Let $y_{ui}=1$ be the positive label if the user $u$ has interacted with the item $i$, otherwise $y_{ui}=0$.
We can construct a user-item bipartite graph $\mathcal{B} = (\mathcal{V}, \mathcal{E})$ based on the interaction history, where $\mathcal{V}$ consists of the set of user and item nodes, and $\mathcal{E}$ denotes the set of edges. If $y_{ui}=1$, there is an edge between the user $u$ and the item $i$. 

Recently, many studies opt for powerful GCNs to learn user and item node representations \cite{KelongMao2021UltraGCNUS, kipf2016semi}. Particularly, we introduce LightGCN \cite{he2020lightgcn}, which is neat and well represents the GCN-based recommenders. 
One graph convolution block of LightGCN can be expressed as:
\begin{equation}
\label{2110701}
    \textbf{e}_u^{(l)} = \sum\limits_{i \in \mathcal{N}_u} \frac{1}{\sqrt{d_u} \sqrt{d_i}}\textbf{e}_i^{(l-1)}, ~~~~
    \textbf{e}_i^{(l)} = \sum\limits_{u \in \mathcal{N}_i} \frac{1}{\sqrt{d_i} \sqrt{d_u}}\textbf{e}_u^{(l-1)},
\end{equation}
where $d_u$ ($d_i$) is the degree of user $u$ (item $i$) in the graph $\mathcal{B}$, $\mathcal{N}_u$ ($\mathcal{N}_i$) is the one-order neighbor nodes of the user $u$ (item $i$), $\textbf{e}^{(0)}$ is an ID embedding of a user or an item. 
After stacking several graph convolution layers, LightGCN combines the embeddings obtained at each layer to form the final representation $\mathbf{e}$ of a node.
Thereafter, the model prediction is defined as the inner product of user and item final representations, \textit{i.e.}, $\hat{y}_{ui} = \textbf{e}_u^{\top} \textbf{e}_i$. Another representative work is UltraGCN~\cite{KelongMao2021UltraGCNUS} which skips infinite layers of graph convolution. We also conduct experiments on this model to verify the generality of our method.

To optimize model parameters, prior work usually frames it as a supervised learning task, and utilize a pointwise or a pairwise loss for model training, which can be summarized by the following formula:
$\mathcal{L}=\sum_{(u,i)\in D} \mathcal{L}^{ui}$,
where $\mathcal{L}^{ui}$ is the individual loss term of an interaction $(u,i)$. Without loss of generality,
we investigate the popularity bias amplification problem in GCNs based on BCE loss \cite{ChristopherJohnson2022LogisticMF} in the next section. The formal formulation of BCE loss is
\begin{equation}
\label{20220703bceloss}
\mathcal{L}^{ui}  = -[y_{ui}\ln\sigma(\hat{y}_{ui}) + (1-y_{ui})\ln(1-\sigma(\hat{y}_{ui}))].
\end{equation}

\section{METHODOLOGY}
In this part, we attempt to analyze and resolve the amplified popularity bias of GCNs.

\subsection{Popularity Bias Amplification in GCNs}
To understand why GCNs amplify the popularity bias, we conduct theoretical analyses and empirical experiments on GCNs: (1) we start with defining the influence between nodes based on the training loss; (2) we prove that popular items with higher degrees exert larger influence on neighbor users than tail items with lower-degrees; (3) we reveal that popular items commonly have higher probabilities of being recommended by users after representation updating and graph convolution in GCNs.

Concisely, we take the $\textbf{e} = \textbf{e}^{(L)}$ at $L$-th graph convolution layer in Eq. \eqref{2110701} as the final representation of each node.
Next, we give the definition of the influence of a user-item pair loss on their neighbors exploiting the concept of influence functions \cite{koh2017understanding, xu2018representation, tang2020investigating}:

\textbf{Definition 1} (Influence of an observed interaction on a node's representation learning): Suppose that $(u,i)$ is an observed interaction, \textit{i.e.}, there is an edge between node $u$ and node $i$, some other node $k$ is reachable from node $u$ or $i$, then the influence of an interaction $(u,i)$ on node $k$ is defined as $\frac{\partial \mathcal{L}^{ui}}{\partial \mathbf{e}_k}$.

Without loss of generality, we mainly consider $y_{ui}=1$ in BCE loss and have
\begin{equation}
\label{20220806eq1}
\begin{aligned}
\frac{\partial \mathcal{L}^{ui}}{\partial \mathbf{e}_k} &= -\frac{\partial \ln\sigma(\hat{y}_{ui})}{\partial \mathbf{e}_k}
 = -\frac{\ln\sigma(\hat{y}_{ui})}{\partial \sigma(\hat{y}_{ui})} \cdot \frac{\partial \sigma(\hat{y}_{ui})}{\partial \hat{y}_{ui}} \cdot \frac{\partial \hat{y}_{ui}}{\partial \mathbf{e}_k}\\
& = -\frac{1}{\sigma(\hat{y}_{ui})}\cdot\sigma(\hat{y}_{ui})(1-\sigma(\hat{y}_{ui}))\cdot\frac{\partial \hat{y}_{ui}}{\partial \mathbf{e}_k}\\
& = -[1 - \sigma(\hat{y}_{ui})]\frac{\partial \hat{y}_{ui}}{\partial \mathbf{e}_k}\\
& = - \lambda_{ui}\frac{\partial \hat{y}_{ui}}{\partial \mathbf{e}_k},
\end{aligned}
\end{equation}
where $\hat{y}_{ui}$ is the prediction between the user $u$ and the item $i$ and $0<\lambda_{ui} <1$.

\textbf{Definition 2} (Influence of a node on another node's representation learning): Suppose that a node $i$ can reach a neighbor node $k$ on a graph\footnote{Without loss of generality, we ignore whether a node is a user or an item.}. The influence of the loss for the target node $i$ on the node $k$ is defined as $\frac{\partial \mathcal{L}_i}{\partial \mathbf{e}_k}$, where $\mathcal{L}_i = \sum_{j\in \mathcal{N}_i} \mathcal{L}^{ij}$. 

In fact, the influence provides a fine-grained scrutiny of updating information of each node at the lens of gradient derivation. We then have the following lemma.
\begin{lemma}
\label{21107thm2}
If node $i$ with degree $d_i$ can reach node $k$ after stacking $L$ layers of graph convolution, then the influence of node $i$ on node $k$ follows 
\begin{equation}
\label{influence_i}
    \mathbb{E}\left (\frac{\partial \mathcal{L}_i}{\partial \textbf{e}_k}\right) \propto
  -d_i^{\frac{3}{2}} \pmb{\vartheta}_i,
\end{equation}
where $\mathbb{E}(\cdot)$ is the expectation, $\pmb{\vartheta}_i = \mathbb{E}\left(\sum\limits_{p=1}^{\Phi_j} \prod\limits_{l=L-1}^1 \frac{1}{\sqrt{d_{p^l}}}\textbf{e}_j\right)$ represents the expectation on paths starting from a neighbor node $j$ of the node $i$ to the node $k$, $\Phi_j$ is the set of all $(L-1)$-length paths from the node $j$ to the node $k$, $p^{L-1}$ and $p^1$ are the node $j$ and the node $k$, respectively. 
\end{lemma}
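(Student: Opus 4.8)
The plan is to differentiate $\mathcal{L}_i$ through the LightGCN recursion of Eq.~\eqref{2110701}, isolate the part that depends on the free parameter $\textbf{e}_k=\textbf{e}_k^{(0)}$, and then take an expectation to expose the $d_i^{3/2}$ scaling. First I would use Definition~2 together with Eq.~\eqref{20220806eq1} to write $\frac{\partial\mathcal{L}_i}{\partial\textbf{e}_k}=\sum_{j\in\mathcal{N}_i}\frac{\partial\mathcal{L}^{ij}}{\partial\textbf{e}_k}=-\sum_{j\in\mathcal{N}_i}\lambda_{ij}\frac{\partial\hat{y}_{ij}}{\partial\textbf{e}_k}$, with $\hat{y}_{ij}=\langle\textbf{e}_i^{(L)},\textbf{e}_j^{(L)}\rangle$ and $0<\lambda_{ij}<1$. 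The observation that keeps the chain rule clean is a parity argument on the bipartite graph $\mathcal{B}$: since $j$ is a one-hop neighbor of $i$, the distance from $j$ to $k$ has opposite parity to the distance from $i$ to $k$, whereas the layer-$L$ embedding produced by the pure recursion of Eq.~\eqref{2110701} depends only on the initial embeddings of nodes reachable by a walk of length \emph{exactly} $L$; hence $\textbf{e}_j^{(L)}$ does not depend on $\textbf{e}_k$, and the product rule collapses to
\begin{equation}
\frac{\partial\hat{y}_{ij}}{\partial\textbf{e}_k}=\Bigl(\tfrac{\partial\textbf{e}_i^{(L)}}{\partial\textbf{e}_k}\Bigr)^{\top}\textbf{e}_j^{(L)}=a_{ik}\,\textbf{e}_j^{(L)},
\end{equation}
where $a_{ik}$ is the scalar obtained by summing, over all length-$L$ walks from $i$ to $k$, the product of the symmetric edge weights $\tfrac{1}{\sqrt{d_\cdot}\sqrt{d_\cdot}}$ along the walk.

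Next I would unroll $a_{ik}$ by its first step: such a walk leaves $i$ for some neighbor $j'\in\mathcal{N}_i$, contributing $\tfrac{1}{\sqrt{d_i}\sqrt{d_{j'}}}$, and is then completed by an $(L-1)$-length path from $j'$ to $k$, i.e.\ an element of $\Phi_{j'}$. This gives $a_{ik}=\tfrac{1}{\sqrt{d_i}}\sum_{j'\in\mathcal{N}_i}\tfrac{1}{\sqrt{d_{j'}}}\sum_{p\in\Phi_{j'}}\bigl(\text{path weight}\bigr)$, a quantity carrying a single explicit factor $\tfrac{1}{\sqrt{d_i}}$ and one sum of size $d_i$. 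Since $a_{ik}$ does not depend on $j$, I obtain $\frac{\partial\mathcal{L}_i}{\partial\textbf{e}_k}=-\tfrac{1}{\sqrt{d_i}}\bigl(\sum_{j'\in\mathcal{N}_i}\tfrac{1}{\sqrt{d_{j'}}}\sum_{p\in\Phi_{j'}}(\text{path weight})\bigr)\bigl(\sum_{j\in\mathcal{N}_i}\lambda_{ij}\textbf{e}_j^{(L)}\bigr)$.

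Finally I would take the expectation. Treating the $d_i=|\mathcal{N}_i|$ neighbour contributions as exchangeable, the double sum over $\mathcal{N}_i\times\mathcal{N}_i$ contains $d_i^2$ terms whose common mean depends only on $j$, $k$ and the intervening path nodes and not on $d_i$; pulling this $d_i^2$ out against the explicit $\tfrac{1}{\sqrt{d_i}}$ leaves $\mathbb{E}\bigl(\frac{\partial\mathcal{L}_i}{\partial\textbf{e}_k}\bigr)\propto -\,d_i^{3/2}\,\pmb{\vartheta}_i$, with $\pmb{\vartheta}_i$ exactly the expected product of the path normalizations along an $(L-1)$-length path from a neighbour $j$ to $k$, times the embedding $\textbf{e}_j$ — the form asserted in the statement.

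I expect the main obstacle to be this last step: making the ``$d_i^{3/2}=d_i^{2}\cdot d_i^{-1/2}$'' count rigorous requires an exchangeability assumption over $i$'s neighbourhood (and jointly over $\lambda_{ij}$, $\textbf{e}_j^{(L)}$ and the path weights), so that the two $|\mathcal{N}_i|$-sized sums each contribute a clean factor $d_i$ inside the expectation; and one must also accept that the per-node $\tfrac{1}{\sqrt{d_{p^l}}}$ bookkeeping inside $\pmb{\vartheta}_i$ is a simplified stand-in for the true symmetric normalization, which weights an internal path node by $\tfrac1d$ and an endpoint by $\tfrac{1}{\sqrt d}$. The chain rule, the bipartite parity argument, and the one-step unrolling of $a_{ik}$ are otherwise routine.
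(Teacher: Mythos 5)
Your proposal follows essentially the same route as the paper's proof: expand $\mathcal{L}_i$ over the neighbors $j$, collapse $\frac{\partial\hat{y}_{ij}}{\partial\mathbf{e}_k}$ to $\frac{\partial\mathbf{e}_i}{\partial\mathbf{e}_k}\mathbf{e}_j$, unroll $\frac{\partial\mathbf{e}_i}{\partial\mathbf{e}_k}$ into weighted $L$-length paths, peel off the first step to expose $d_i^{-1/2}$, and use exchangeability of neighbor contributions in the expectation to turn the two $d_i$-sized sums into the $d_i^{3/2}$ factor with the remaining path expectation being $\pmb{\vartheta}_i$. Your additions (the bipartite parity argument for why $\mathbf{e}_j^{(L)}$ is independent of $\mathbf{e}_k$, and the explicit $d_i^{2}\cdot d_i^{-1/2}$ counting) only make explicit steps the paper performs implicitly, so the argument matches and is correct at the paper's level of rigor.
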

\begin{proof}
According to Eq. \eqref{20220806eq1} and Eq. \eqref{2110701}, we obtain
\begin{equation}
\begin{aligned}
\label{220620eq1}
    \frac{\partial \mathcal{L}_i}{\partial \mathbf{e}_k} &= \sum\limits_{j\in \mathcal{N}_i}\frac{\partial \mathcal{L}^{ij}}{\partial \mathbf{e}_k}
    =-\sum\limits_{j\in \mathcal{N}_i}\lambda_{ij}\frac{\partial \hat{y}_{ij}}{\partial \mathbf{e}_k} = -\sum\limits_{j\in \mathcal{N}_i}\lambda_{ij}\frac{\partial\mathbf{e}_i}{\partial\mathbf{e}_k}\mathbf{e}_j\\
    &= -\sum\limits_{j\in \mathcal{N}_i}\lambda_{ij} \sum\limits_{p=1}^{\Phi}\prod\limits_{l=L}^{1} \frac{1}{\sqrt{d_{p^l}}}\mathbf{e}_j,
\end{aligned}
\end{equation}
where $p^{L}$ is the node $i$, $\Phi$ is the set of all $L$-length random paths on the graph from nodes $i$ to $k$, and we omit the transpose symbol of the partial derivative for brevity. 
Further, we have

\begin{equation}
\begin{aligned}
\mathbb{E}\left (\frac{\partial \mathcal{L}_i}{\partial \textbf{e}_k}\right) &= -\mathbb{E}\left( \sum\limits_{j\in \mathcal{N}_i}\lambda_{ij} \sum\limits_{p=1}^{\Phi}\prod\limits_{l=L}^{1} \frac{1}{\sqrt{d_{p^l}}}\mathbf{e}_j\right)
\\
&\propto -\sqrt{d_i} \mathbb{E}\left(\sum\limits_{j\in \mathcal{N}_i} \sum\limits_{p=1}^{\Phi_j}  \prod\limits_{l=L-1}^1 \frac{1}{\sqrt{d_{p^l}}}\textbf{e}_j\right)
\\
&\approx -d_i^{\frac{3}{2}}\mathbb{E}\left(\sum\limits_{p=1}^{\Phi_j}  \prod\limits_{l=L-1}^1 \frac{1}{\sqrt{d_{p^l}}}\textbf{e}_j\right)
 = -d_i^{\frac{3}{2}} \pmb{\vartheta}_i.
\end{aligned}
\label{20220126eq1}
\end{equation}
Finish the proof.
\end{proof}

We visualize the results of $d^{\frac{3}{2}}\Vert\pmb{\vartheta}\Vert$ and $\Vert\pmb{\vartheta}\Vert$ in log scale in Figure \ref{theta_vis}. Items are evenly divided into several groups in ascending order of their degrees. For each item group, we show its average $\ln(d^{\frac{3}{2}}\Vert\pmb{\vartheta}\Vert)$ and $\ln(\Vert\pmb{\vartheta}\Vert)$ in terms of the two-hop neighbor nodes (\textit{i.e.,} $L=2$ in Lemma \ref{21107thm2}) when training LightGCN. As we see, $\ln(d^{\frac{3}{2}}\Vert\pmb{\vartheta}\Vert)$ enlarges as degree increases. Compared to $\ln(d^{\frac{3}{2}}\Vert\pmb{\vartheta}\Vert)$, $\ln(\Vert\pmb{\vartheta}\Vert)$ is relatively smaller and flat across at different degrees. This illustrates that the degree of nodes plays a crucial role in the influence. Following the assumption in \cite{tang2020investigating}, we posit $\Vert\pmb{\vartheta}_i\Vert = \phi$ for any node $i$. 

\noindent \textbf{Conclusion 1:} If nodes $r$ and $s$ with $d_r > d_s$ both can reach node $t$ after $L$-hop neighborhood aggregation in a graph, we could have
\begin{equation}
\left\|\mathbb{E}\left(\frac{\partial \mathcal{L}_r}{\partial \textbf{e}_t}\right)\right\| > \left\|\mathbb{E}\left(\frac{\partial \mathcal{L}_s}{\partial \textbf{e}_t}\right) \right\|.
\end{equation}
It suggests that the nodes with higher degrees exert larger influence on L-hop neighbor nodes than lower-degree nodes in the training stage of GCN-based models. In other words, the popular items dominate the updating information of neighbor users. 
As a result, popular items would make reachable users get closer to them in the representation space.
Based on the results in the upcoming lemma, we further prove that the popular items tend to have higher probabilities of being recommended by users.

\begin{figure}
\includegraphics[scale=0.42]{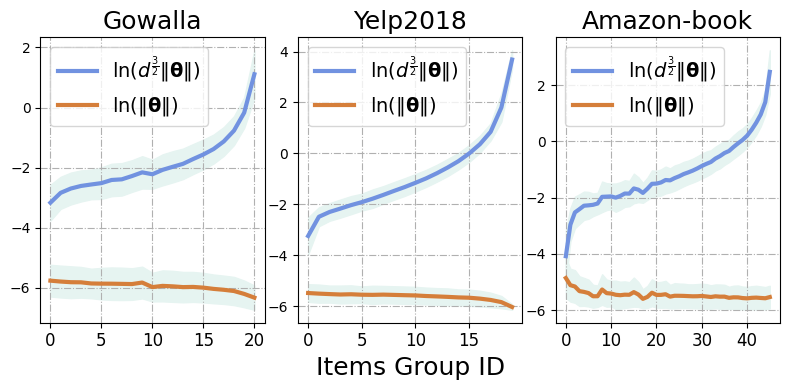}
\caption{Average $\Vert\pmb{\theta}\Vert$ and $d^{\frac{3}{2}}\Vert\pmb{\theta}\Vert$ in each items group. Items are sorted into groups in ascending order of their degrees.}  
\label{theta_vis}
\end{figure}

\begin{figure}
\includegraphics[scale=0.42]{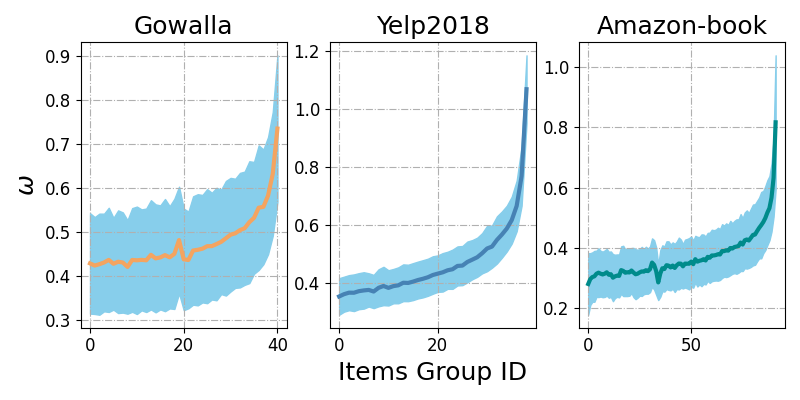}
\caption{The average aggregation weight of one-order neighbor users in each items group. Items are sorted into groups in ascending order of their degrees.}  
\label{factor_omega}
\vspace{-0.4cm}
\end{figure}

\begin{lemma}
\label{220703thm3}
 Suppose that the items $r$ and $s$ could reach the user $t$ by stacking $L$ layers of graph convolution, and $d_r>d_s$. After $L-1$ rounds of graph convolution, the expectation of prediction difference between the two items with regard to the user $t$ is $\mathbb{E}\left[\mathbf{e}_t^{\top}(\mathbf{e}_r-\mathbf{e}_s)\right]$. By performing the $L$-th graph convolution and after the representation of user $t$ updates the gradients $\left(\frac{\partial \mathcal{L}_r}{\partial \mathbf{e}_t} + \frac{\partial L_s}{\partial \mathbf{e}_t}\right)$, the prediction difference becomes larger, \text{i.e.,}
\begin{equation}
\label{20220720eq2}
    \mathbb{E}\left\{\left[\mathbf{e}_t - \left(\frac{\partial \mathcal{L}_r}{\partial \mathbf{e}_t} + \frac{\partial \mathcal{L}_s}{\partial \mathbf{e}_t}\right)\right]^{\top}(\mathbf{e}_r - \mathbf{e}_s)\right\} \geq \mathbb{E}\left[\mathbf{e}_t^{\top}(\mathbf{e}_r-\mathbf{e}_s)\right].
\end{equation}
Further, after $e_r$ and $e_s$ updates, the prediction difference continues to enlarge,
\begin{equation}
\begin{aligned}
    &\mathbb{E}\left\{\left[\mathbf{e}_t - \left(\frac{\partial \mathcal{L}_r}{\partial \mathbf{e}_t} + \frac{\partial \mathcal{L}_s}{\partial \mathbf{e}_t}\right)\right]^{\top}(\mathbf{e}'_r - \mathbf{e}'_s)\right\}
    \\ 
    &\geq \mathbb{E}\left\{\left[\mathbf{e}_t - \left(\frac{\partial \mathcal{L}_r}{\partial \mathbf{e}_t} + \frac{\partial \mathcal{L}_s}{\partial \mathbf{e}_t}\right)\right]^{\top}(\mathbf{e}_r - \mathbf{e}_s)\right\}.
\end{aligned}
\label{20220724eq1}
\end{equation}
where $\mathbf{e}'_r$ and $\mathbf{e}'_s$ are the representations of the items $r$ and $s$ after aggregating the user $t$, respectively.
\end{lemma}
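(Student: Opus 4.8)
The plan is to handle the two inequalities \eqref{20220720eq2} and \eqref{20220724eq1} separately; in each case I would move everything to one side, use linearity of expectation together with bilinearity of the inner product, and reduce the claim to the nonnegativity of a single expectation that can be controlled by Lemma~\ref{21107thm2}, the working assumption $\|\pmb{\vartheta}_i\| = \phi$, and the degree ordering $d_r > d_s$ (Conclusion~1).

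For \eqref{20220720eq2}, subtracting the right-hand side from the left shows the inequality is equivalent to $\mathbb{E}\left[\left(\frac{\partial\mathcal{L}_r}{\partial\mathbf{e}_t} + \frac{\partial\mathcal{L}_s}{\partial\mathbf{e}_t}\right)^{\top}(\mathbf{e}_r-\mathbf{e}_s)\right]\le 0$. I would then substitute Lemma~\ref{21107thm2}, i.e. $\mathbb{E}\!\left(\frac{\partial\mathcal{L}_r}{\partial\mathbf{e}_t}\right)\propto -d_r^{3/2}\pmb{\vartheta}_r$ and $\mathbb{E}\!\left(\frac{\partial\mathcal{L}_s}{\partial\mathbf{e}_t}\right)\propto -d_s^{3/2}\pmb{\vartheta}_s$ with the same positive constant, and invoke the assumption (supported by Figure~\ref{theta_vis} and following \cite{tang2020investigating}) that $\pmb{\vartheta}_i$ is aligned with $\mathbf{e}_i/\|\mathbf{e}_i\|$ and has common norm $\phi$, together with comparable item norms $\|\mathbf{e}_r\|\approx\|\mathbf{e}_s\|=\rho$. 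Writing $\cos\theta$ for the cosine similarity between $\mathbf{e}_r$ and $\mathbf{e}_s$, a short expansion collapses the left-hand side to a positive multiple of $\rho\phi\,(d_r^{3/2}-d_s^{3/2})(1-\cos\theta)$, which is $\ge 0$ since $d_r>d_s$ and $\cos\theta\le 1$; this settles \eqref{20220720eq2}.

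For \eqref{20220724eq1}, write $\tilde{\mathbf{e}}_t := \mathbf{e}_t - \left(\frac{\partial\mathcal{L}_r}{\partial\mathbf{e}_t} + \frac{\partial\mathcal{L}_s}{\partial\mathbf{e}_t}\right)$ for the updated user vector and $\mathbf{e}'_r = \mathbf{e}_r + \delta_r$, $\mathbf{e}'_s = \mathbf{e}_s + \delta_s$ for the item vectors after the $L$-th graph convolution \eqref{2110701} re-aggregates user $t$ with its new representation. Cancelling the common term reduces the inequality to $\mathbb{E}\!\left[\tilde{\mathbf{e}}_t^{\top}(\delta_r-\delta_s)\right]\ge 0$. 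I would express $\delta_r$ and $\delta_s$ through the aggregation rule (equivalently through the corresponding gradient step on the item embeddings), obtaining each as a nonnegative multiple of $\tilde{\mathbf{e}}_t$, and then use that, by \eqref{20220720eq2} and Conclusion~1, $\tilde{\mathbf{e}}_t$ is already tilted toward the popular item $r$: the resulting contribution to $\mathbf{e}'_r$ reinforces $\mathbf{e}_r$ more than the contribution to $\mathbf{e}'_s$ reinforces $\mathbf{e}_s$, so $\mathbb{E}[\tilde{\mathbf{e}}_t^{\top}\delta_r]\ge\mathbb{E}[\tilde{\mathbf{e}}_t^{\top}\delta_s]$, which is exactly what is needed.

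I expect the sign control in \eqref{20220724eq1} to be the main obstacle. Unlike the loss-gradient influence of Lemma~\ref{21107thm2}, which carries the amplifying factor $d_i^{3/2}$, the bare convolution weight $\frac{1}{\sqrt{d_r d_t}}$ attached to user $t$ is actually smaller for the popular item, so the sign of $\mathbb{E}[\tilde{\mathbf{e}}_t^{\top}(\delta_r-\delta_s)]$ is not forced by the aggregation weights alone. Closing the argument therefore requires being precise about which model of the ``item update after aggregating $t$'' is used --- for instance noting that the loss-derivative coefficients $\lambda_{tr},\lambda_{ts}$ multiplying these updates are comparable, so that the contested term vanishes and \eqref{20220724eq1} holds with equality, or else quantifying explicitly how the Part-1 tilt of $\tilde{\mathbf{e}}_t$ toward $\mathbf{e}_r$ feeds back into $\|\delta_r\|$ relative to $\|\delta_s\|$. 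I would also state the alignment assumption $\pmb{\vartheta}_i\parallel\mathbf{e}_i$ and the equal-norm assumption explicitly, since the clean reduction in the Part-1 step relies on both.
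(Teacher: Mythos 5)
Your treatment of \eqref{20220720eq2} follows essentially the paper's own route: substitute the influence estimate $\mathbb{E}\left(\frac{\partial\mathcal{L}_r}{\partial\mathbf{e}_t}\right)=-\upsilon_1 d_r^{3/2}\pmb{\vartheta}_r$ from Lemma~\ref{21107thm2}, impose the alignment assumption $\pmb{\vartheta}_i/\Vert\pmb{\vartheta}_i\Vert=\upsilon_2\,\mathbf{e}_i/\Vert\mathbf{e}_i\Vert$ together with $\Vert\pmb{\vartheta}_i\Vert=\phi$, and check that the surplus term is nonnegative. The one divergence is your equal-norm simplification $\Vert\mathbf{e}_r\Vert\approx\Vert\mathbf{e}_s\Vert$: the paper assumes the opposite, namely that embedding norms grow with degree, writes $\Vert\mathbf{e}_r\Vert=\kappa\Vert\mathbf{e}_s\Vert$ with $\kappa>1$, and shows the surplus equals $(d_r^{3/2}\kappa-d_s^{3/2})\Vert\mathbf{e}_s\Vert-(d_r^{3/2}-d_s^{3/2}\kappa)\rho/\Vert\mathbf{e}_r\Vert>0$ with $\rho=\mathbf{e}_r^{\top}\mathbf{e}_s$. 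Your computation is the $\kappa=1$ boundary case and still gives nonnegativity (a positive multiple of $(d_r^{3/2}-d_s^{3/2})(1-\cos\theta)$), so the first inequality goes through; just be aware that equal norms is not a harmless normalization but a hypothesis that contradicts the degree--norm monotonicity the paper itself invokes --- it merely happens not to matter for this step.

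For \eqref{20220724eq1} you set up exactly the paper's reduction --- write $\mathbf{e}'_r=\mathbf{e}_r+\omega_{rt}\tilde{\mathbf{e}}_t$, $\mathbf{e}'_s=\mathbf{e}_s+\omega_{st}\tilde{\mathbf{e}}_t$ and reduce the claim to the sign of $\mathbb{E}\left[(\omega_{rt}-\omega_{st})\Vert\tilde{\mathbf{e}}_t\Vert^2\right]$ --- but you then leave that sign open, and the escape routes you sketch do not close it: the Part-1 tilt of $\tilde{\mathbf{e}}_t$ toward $\mathbf{e}_r$ has no bearing on the scalar weights $\omega_{rt},\omega_{st}$ in this decomposition, and there is no reason the contested term should vanish. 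This is the genuine gap. The paper closes it by an explicit additional hypothesis: it measures the aggregation weights when training LightGCN and observes (Figure~\ref{factor_omega}) that the average aggregation weight of one-order neighbor users increases with item degree, then takes $\omega_{rt}-\omega_{st}\geq 0$ for $d_r>d_s$ and concludes immediately. In other words, the second inequality of Lemma~\ref{220703thm3} is established conditionally on this empirically supported monotonicity of $\omega$, not derived from the raw normalized coefficients $1/\sqrt{d_r d_t}$ of Eq.~\eqref{2110701}; your observation that those coefficients by themselves go the wrong way for the popular item is accurate, but the missing move is simply to state and justify (empirically, as the paper does) the assumption $\omega_{rt}\geq\omega_{st}$ rather than to try to force the sign theoretically from the convolution formula.
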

\begin{proof}
After the influence of the items $r$ and $s$ propagate to the user $t$ by stacking $L$ graph convolution layers, this changes the representation of user $t$ from $\mathbf{e}_t$ to $\mathbf{e}_t - \left(\frac{\partial \mathcal{L}_r}{\partial \mathbf{e}_t} + \frac{\partial \mathcal{L}_s}{\partial \mathbf{e}_t}\right)$. Assume that the influence $\mathbb{E}\left(\frac{\partial\mathcal{L}_r}{\partial\mathbf{e}_t}\right) = -\upsilon_1 d_r^{\frac{3}{2}} \pmb{\vartheta}_r$ ($\upsilon_1>0$) and $\frac{\pmb{\vartheta}_r}{\Vert\pmb{\vartheta}_r\Vert} = \upsilon_2 \frac{\mathbf{e}_r}{\Vert\mathbf{e}_r\Vert}$ ($\upsilon_2 > 0$ for the local 
homogeneity), likewise, $\mathbb{E}\left(\frac{\partial\mathcal{L}_s}{\partial\mathbf{e}_t}\right) = -\upsilon_1 d_s^{\frac{3}{2}} \pmb{\vartheta}_s$ and $\frac{\pmb{\vartheta}_s}{\Vert\pmb{\vartheta}_s\Vert} = \upsilon_2 \frac{\mathbf{e}_s}{\Vert\mathbf{e}_s\Vert}$. Now we calculate the prediction difference between the items $r$ and $s$ on the user $t$,
\begin{equation}
\begin{aligned}
&\mathbb{E}\left\{\left[\mathbf{e}_t - \left(\frac{\partial \mathcal{L}_r}{\partial \mathbf{e}_t} + \frac{\partial \mathcal{L}_s}{\partial \mathbf{e}_t}\right)\right]^{\top}(\mathbf{e}_r - \mathbf{e}_s)\right\} \\
=&\mathbb{E}[\mathbf{e}_t^{\top}(\mathbf{e}_r - \mathbf{e}_s)] \\&+ \upsilon_1\mathbb{E}\left[d_r^{\frac{3}{2}}\pmb{\vartheta}_r^{\top} \mathbf{e}_r - d_r^{\frac{3}{2}}\pmb{\vartheta}_r^{\top}\mathbf{e}_s +d_s^{\frac{3}{2}}\pmb{\vartheta}_s^{\top}\mathbf{e}_r - d_s^{\frac{3}{2}}\pmb{\vartheta}_s^{\top}\mathbf{e}_s\right]\\
=& \mathbb{E}[\mathbf{e}_t^{\top}(\mathbf{e}_r - \mathbf{e}_s)]+\upsilon_1\upsilon_2 \mathbb{E}\left[d_r^{\frac{3}{2}}\frac{\Vert\pmb{\vartheta}_r\Vert}{\Vert\mathbf{e}_r\Vert} \mathbf{e}_r^{\top}\mathbf{e}_r -d_r^{\frac{3}{2}}\frac{\Vert\pmb{\vartheta}_r\Vert}{\Vert\mathbf{e}_r\Vert} \mathbf{e}_r^{\top}\mathbf{e}_s\right.\\ &\left.+d_s^{\frac{3}{2}}\frac{\Vert\pmb{\vartheta}_s\Vert}{\Vert\mathbf{e}_s\Vert} \mathbf{e}_s^{\top}\mathbf{e}_r - d_s^{\frac{3}{2}}\frac{\Vert\pmb{\vartheta}_s\Vert}{\Vert\mathbf{e}_s\Vert} \mathbf{e}_s^{\top}\mathbf{e}_s\right]\\
=& \mathbb{E}[\mathbf{e}_z^{\top}(\mathbf{e}_r - \mathbf{e}_s)] \\&+
\upsilon_1\upsilon_2\phi\mathbb{E}\left[d_r^{\frac{3}{2}}(\Vert\mathbf{e}_r\Vert-\frac{\rho}{\Vert\mathbf{e}_r\Vert}) - d_s^{\frac{3}{2}}(\Vert\mathbf{e}_s\Vert-\frac{\rho}{\Vert\mathbf{e}_s\Vert})\right],
\end{aligned}
\end{equation}
where $\rho = \mathbf{e}_r^{\top} \mathbf{e}_s$ and $\rho\leq \Vert\mathbf{e}_r\Vert\Vert\mathbf{e}_s\Vert$. Since the magnitude of the node representation increases as its degree increases \cite{wu2022effectiveness}, for $d_r > d_s$, there is $\Vert\mathbf{e}_r\Vert > \Vert\mathbf{e}_s\Vert$ generally. Let $\Vert\mathbf{e}_r\Vert = \kappa \Vert\mathbf{e}_s\Vert$ with $\kappa>1$. 
Therefore, we have 
\begin{equation}
\label{20220720eq1}
\begin{aligned}
    &d_r^{\frac{3}{2}}(\Vert\mathbf{e}_r\Vert-\frac{\rho}{\Vert\mathbf{e}_r\Vert}) - d_s^{\frac{3}{2}}(\Vert\mathbf{e}_s\Vert-\frac{\rho}{\Vert\mathbf{e}_s\Vert})\\
    =&(d_r^{\frac{3}{2}}\kappa-d_s^{\frac{3}{2}})\Vert\mathbf{e}_s\Vert - (d_r^{\frac{3}{2}}-d_s^{\frac{3}{2}}\kappa)\frac{\rho}{\Vert\mathbf{e}_r\Vert},
\end{aligned}
\end{equation}
since $d_r^{\frac{3}{2}}\kappa-d_s^{\frac{3}{2}} > d_r^{\frac{3}{2}}-d_s^{\frac{3}{2}}\kappa$ and $\Vert\mathbf{e}_s\Vert \geq \frac{\rho}{\Vert\mathbf{e}_r\Vert}$, thus Eq. \eqref{20220720eq1} $>$ 0.
Based on this, we derive the expression of \eqref{20220720eq2}. 

Furthermore, after the items $r$ and $s$ aggregate the information of the user $t$, we obtain
\begin{equation}
\begin{aligned}
\mathbf{e}'_r &= \mathbf{e}_r + \omega_{rt}\left[\mathbf{e}_t - \left(\frac{\partial \mathcal{L}_r}{\partial \mathbf{e}_t} + \frac{\partial \mathcal{L}_s}{\partial \mathbf{e}_t}\right)\right]
=\mathbf{e}_r + \omega_{rt}\Tilde{\mathbf{e}}_t,
\end{aligned}
\end{equation}
where 
$\omega_{rt}$ is the weight of aggregation.
Likewise,
\begin{equation}
\begin{aligned}
\mathbf{e}'_s &= \mathbf{e}_s + \omega_{st}\left[\mathbf{e}_t - \left(\frac{\partial \mathcal{L}_r}{\partial \mathbf{e}_t} + \frac{\partial \mathcal{L}_s}{\partial \mathbf{e}_t}\right)\right]
= \mathbf{e}_s + \omega_{st}\Tilde{\mathbf{e}}_t.
\end{aligned}
\end{equation}
Now we calculate the rating difference again,
\begin{equation}
\begin{aligned}
&\mathbb{E}\left\{\left[\mathbf{e}_t - \left(\frac{\partial \mathcal{L}_r}{\partial \mathbf{e}_t} + \frac{\partial \mathcal{L}_s}{\partial \mathbf{e}_t}\right)\right]^{\top}(\mathbf{e}'_r - \mathbf{e}'_s)\right\} \\
=& \mathbb{E}\left\{\Tilde{\mathbf{e}}^{\top}_t[(\mathbf{e}_t - \mathbf{e}_s) + (\omega_{rt}-\omega_{st})\Tilde{\mathbf{e}}_t] \right\}\\
=& \mathbb{E}\left[\Tilde{\mathbf{e}}^{\top}_t(\mathbf{e}_r - \mathbf{e}_s) + (\omega_{rt}-\omega_{st})\Vert\Tilde{\mathbf{e}}_t\Vert^2\right],
\end{aligned}
\end{equation}
when $\omega_{rt}-\omega_{st}\geq 0$, the expression of \eqref{20220724eq1} holds. We visualize the average aggregation weight $\omega$ of one-order neighbor users for each item groups when training LightGCN in Figure \ref{factor_omega}. From the results, it can be observed that $\omega$ is larger as degree increases generally. Therefore, the expectation of the rating difference would be enlarged for $d_r > d_s$ after graph convolution. Finish the proof.
\end{proof}

\noindent \textbf{Conclusion 2:} The theoretical analyses show that the gap of the prediction scores between popular items and tail items \textit{w.r.t.} users enlarges with deep layers of graph convolution. It indicates that popular items would become more influential by affecting more high-order neighbor users. As a consequence, popular items are more likely be over-recommended as GCNs go deeper. It reveals how GCN-based models amplify the popularity bias, providing theoretical supports to the phenomenon shown in Figure \ref{tailratio}.

\subsection{Our Method --- DAP}

In this section, we propose our method DAP (\textbf{D}ebias the \textbf{A}mplification of \textbf{P}opularity) to alleviate the issue of the popularity bias amplification of GCN in the inference stage. 

From Lemma \ref{220703thm3}, the popularity bias amplification comes from the updating of node representation and neighborhood aggregation after graph convolution when training GCN backbone models. Taking LightGCN as an example, we can quantify the bias after graph convolution at each layer in a unified form as
\begin{equation}
\begin{aligned}
\mathbf{e}^{(l)}_v &= \sum\limits_{j\in\mathcal{N}_v}\frac{1}{\sqrt{d_v}\sqrt{d_j}}\mathbf{e}^{(l-1)}_j\\
 &=  \sum\limits_{j\in\mathcal{N}_v}\frac{1}{\sqrt{d_v}\sqrt{d_j}}(\hat{\mathbf{e}}^{(l-1)}_j+ \alpha_{H_j}^{(l-1)}\pmb{\vartheta}^{(l-1)}_{H_j}+\alpha_{L_j}^{(l-1)}\pmb{\vartheta}^{(l-1)}_{L_j})\\
&=\hat{\mathbf{e}}^{(l)}_v+ \alpha_{H_v}^{(l)}\pmb{\vartheta}^{(l)}_{H_v}+\alpha_{L_v}^{(l)}\pmb{\vartheta}^{(l)}_{L_v},
\end{aligned}
\end{equation}
where $\hat{\mathbf{e}}^{(l)}_v$ is the ideal representation of the node $v$ at the $l$-th layer, $\alpha_{H_v}^{(l)}\pmb{\vartheta}^{(l)}_{H_v}$ and $\alpha_{L_v}^{(l)}\pmb{\vartheta}^{(l)}_{L_v}$ are the bias that comes from higher-degree and lower-degree neighbors respectively. Specifically, there are two amplification effects: (1) the higher-degree neighbors have large influence and dominate the updating of the target node's representation. It is inclined to make the target node position close to the higher-degree neighbors in the representation space; (2) for the lower-degree neighbors, the representations of them are influenced by the target node, leading to biased learning; after graph convolution, such bias would be further aggregated into the target node. Those two amplification effects are need to be estimated and intervened.

In addition, in order to estimate the bias of each node, we employ clustering algorithms to group node representations into clusters. The clustering can automatically discover the highly-influential nodes as they are close to each other in the representation space. For each node, we estimate its amplification effect within its cluster and then intervene the bias in the representations. The specific debiasing process is as follows.

Given a well-trained GCN-based backbone model, we could obtain the 0-th layer representations $\textbf{E}^{(0)}\in \mathbb{R}^{(M+N)\times D}$ (D is the embedding size) of all nodes. Then, these representations are fed into the next layer of graph convolution. As we discussed earlier, the bias appears after graph convolution and nodes which in the same cluster are most possible to affect each other. Thus, $Kmeans$ is employed to group nodes in the representation space. For the $l$-th layer node representations $\textbf{E}^{(l)}$ of all nodes, $Kmeans$ automatically divides them into $P$ clusters, \textit{i.e.}, 
\begin{equation}
\label{211019kmeans}
    \{\mathcal{C}_1^{(l)},\mathcal{C}_2^{(l)}, \cdots, \mathcal{C}_P^{(l)}\} = Kmeans(\textbf{E}^{(l)}),
\end{equation}
where $P$ is a hyper-parameter of $Kmeans$. For a node $v$, we can know the cluster it belongs to. To intervene the amplified bias effect $\textbf{b}_v^{(l)}$ after $l$-th graph convolution, we have the following strategy: for the node $v$ with degree $d_v$ in the cluster $\mathcal{C}^{(l)}_p$, we can obtain a set of higher-degree nodes $\mathbb{H}_v^{(l)}=\{j\in\mathcal{C}^{(l)}_p, d_j > d_v\}$, and a set of lower-degree nodes $\mathbb{L}_v^{(l)}=\{j\in\mathcal{C}^{(l)}_p, d_j < d_v\}$. For the two parts $\mathbb{H}_v^{(l)}$ and $\mathbb{L}_v^{(l)}$ of node $v$, their pooling (e.g., mean pooling, weighted average pooling with degree) representations after normalization $\
\hat{\pmb{\vartheta}}_{H_v}^{(l)} \in \mathbb{R}^{1\times D}$ and $\hat{\pmb{\vartheta}}_{L_v}^{(l)}\in \mathbb{R}^{1\times D}$ can be computed respectively. Thereafter, its amplification bias $\textbf{b}_v^{(l)}$ after $l$-th layer graph convolution is estimated by 
\begin{equation}
\label{220207bias-estimating}
    \textbf{b}_v^{(l)} = \alpha \mathcal{M}(\textbf{e}^{(l)}_v, \hat{\pmb{\vartheta}}_{H_v}^{(l)})\hat{\pmb{\vartheta}}_{H_v}^{(l)} + \beta \mathcal{M}(\textbf{e}^{(l)}_v, \hat{\pmb{\vartheta}}_{L_v}^{(l)})\hat{\pmb{\vartheta}}_{L_v}^{(l)},
\end{equation}
where $\alpha$ and $\beta$ are hyper-parameters that intervene the effect of bias in the final representation at the $l$-th layer since the popularity bias may be not completely harmful \cite{zhao2021popularity}. The larger the values of $\alpha$ and $\beta$, the greater the bias on the node. $\mathcal{M}$ is a similarity calculation function (\textit{e.g.}, cosine similarity) for measuring how strongly the node $v$ is affected by different parts.

After the above operations, we intervene the bias effect and revise the representation of the node $v$ at $l$-th layer graph convolution, \textit{i.e.},
\begin{equation}
\label{alpha_beta}
\hat{\textbf{e}}_v^{(l)}= \textbf{e}_v^{(l)} - \textbf{b}_v^{(l)}.
\end{equation}
Then the revised representation $\hat{\textbf{e}}_v^{(l)}$ is fed into the next layer in GCNs and we get the node representations at $(l+1)$-th layer $\mathbf{E}^{(l+1)}$. 
In an iterative manner, we can obtain all the ideal representations at each layer. For all representations rectified at different layers, they are assembled in the manner as the original model does to get the final representations.



\section{Experiments}
In this section, we conduct experiments to evaluate the performance of our proposed DAP, aiming at answering the following research questions:
\begin{itemize}[leftmargin=*]
\item{\textbf{RQ1}}: Does DAP outperform other debiasing methods? 
\item{\textbf{RQ2}}: How do higher-degree part, lower-degree part and the hyper-parameters affect the recommendation performance? 
\item{\textbf{RQ3}}: Can DAP mitigate the popularity bias?
\end{itemize}

\subsection{Experiments Settings}
\begin{table}[t]
\caption{Dataset description}
\resizebox{\linewidth}{!}{
\begin{tabular}{c|c|c|c|c}
\hline
Dataset     & \#Users  & \#Items  & \#Interactions & Density \\ \hline
Gowalla     & 29,858 & 40,981 & 1,027,370    & 0.00084  \\ \hline
Yelp2018    & 31,668 & 38,048 & 1,561,406    & 0.00130  \\ \hline
Amazon-book & 52,643 & 91,599 & 2,984,108    & 0.00062  \\ \hline
\end{tabular}}
\label{table:1}
\end{table}

\begin{table*}[]
    \caption{Performance comparison between our method DAP and other counterparts on the Overall and Tail test sets. The `improve' is the relative improvement of LightGCN-DAP-o over LightGCN.}
\resizebox{\linewidth}{!}{
\begin{tabular}{c|cccc|cccc|cccc}
\hline
Dataset                 & \multicolumn{4}{c|}{Gowalla}                                                                                                                         & \multicolumn{4}{c|}{Yelp2018}                                                              & \multicolumn{4}{c}{Amazon-book}                                                            \\ \hline
\multirow{2}{*}{Models} & \multicolumn{2}{c|}{Overall}                                               & \multicolumn{2}{c|}{Tail}                                               & \multicolumn{2}{c|}{Overall}                           & \multicolumn{2}{c|}{Tail}         & \multicolumn{2}{c|}{Overall}                           & \multicolumn{2}{c}{Tail}          \\ \cline{2-13} 
                        & Recall                              & \multicolumn{1}{c|}{NDCG}            & Recall                              & NDCG                              & Recall          & \multicolumn{1}{c|}{NDCG}            & Recall          & NDCG            & Recall          & \multicolumn{1}{c|}{NDCG}            & Recall          & NDCG            \\ \hline
LightGCN                & {\ul 0.1820}                        & \multicolumn{1}{c|}{{\ul 0.1546}}    & 0.0434                              & 0.0191                            & {\ul 0.0627}    & \multicolumn{1}{c|}{{\ul 0.0516}}    & 0.0091          & 0.0046          & {\ul 0.0414}    & \multicolumn{1}{c|}{{\ul 0.0321}}    & 0.009           & 0.0051          \\
BFGCN                   & 0.1083                              & \multicolumn{1}{c|}{0.0805}          & 0.0468                              & 0.0245                            & 0.0389          & \multicolumn{1}{c|}{0.0311}          & 0.0124          & 0.0076          & 0.0276          & \multicolumn{1}{c|}{0.0211}          & 0.0097          & 0.0059          \\
LightGCN-IPSCN          & \multicolumn{1}{l}{0.1325}          & \multicolumn{1}{l|}{0.1132}          & 0.0477                              & 0.0213                            & 0.0473          & \multicolumn{1}{c|}{0.0391}          & 0.0136          & 0.0077          & 0.0285          & \multicolumn{1}{c|}{0.0221}          & 0.0118          & 0.0069          \\
LightGCN-CausE          & \multicolumn{1}{l}{0.1334}          & \multicolumn{1}{l|}{0.1137}          & 0.0485                              & 0.0225                            & 0.0492          & \multicolumn{1}{c|}{0.0405}          & 0.0141          & 0.0085          & 0.0299          & \multicolumn{1}{c|}{0.0230}          & 0.0127          & 0.0078          \\
LightGCN-DICE           & 0.1337                              & \multicolumn{1}{c|}{0.1138}          & 0.0493                              & 0.0241                            & 0.0505          & \multicolumn{1}{c|}{0.0409}          & 0.0132          & 0.0073          & 0.0348          & \multicolumn{1}{c|}{0.0264}          & 0.0121          & 0.0074          \\
LightGCN-MACR           & 0.1188                              & \multicolumn{1}{c|}{0.0928}          & 0.0478                              & 0.0219                            & 0.0343          & \multicolumn{1}{c|}{0.027}           & \textbf{0.0233} & {\ul 0.0126}    & 0.0269          & \multicolumn{1}{c|}{0.0204}          & 0.0108          & 0.0065          \\
LightGCN-Tail           & 0.1647                              & \multicolumn{1}{c|}{0.1391}          & 0.0628                              & 0.0319                            & 0.057           & \multicolumn{1}{c|}{0.0466}          & 0.0154          & 0.0095          & 0.0369          & \multicolumn{1}{c|}{0.0283}          & 0.0151          & 0.0094          \\
LightGCN-BxQuAD         & 0.1378                              & \multicolumn{1}{c|}{0.1130}          & {\ul 0.0689}                        & \textbf{0.0360}                   & 0.0545          & \multicolumn{1}{c|}{0.0431}          & 0.0209          & 0.0123          & 0.0389          & \multicolumn{1}{c|}{0.0304}          & {\ul 0.0164}    & \textbf{0.0108} \\ \hline
LightGCN-DAP-o          & \multicolumn{1}{l}{\textbf{0.1834}} & \multicolumn{1}{l|}{\textbf{0.1564}} & \multicolumn{1}{l}{0.0538}          & \multicolumn{1}{l|}{0.0245}       & \textbf{0.0634} & \multicolumn{1}{c|}{\textbf{0.0521}} & 0.0137          & 0.0073          & \textbf{0.0436} & \multicolumn{1}{c|}{\textbf{0.0339}} & 0.0134          & 0.0079          \\
LightGCN-DAP-t          & \multicolumn{1}{l}{0.1672}          & \multicolumn{1}{l|}{0.1427}          & \multicolumn{1}{l}{\textbf{0.0708}} & \multicolumn{1}{l|}{{\ul 0.0354}} & 0.0562          & \multicolumn{1}{c|}{0.0461}          & {\ul 0.0218}    & \textbf{0.0129} & 0.0414          & \multicolumn{1}{c|}{0.0328}          & \textbf{0.0166} & {\ul 0.0102}    \\ \hline
improve                 & 0.77\%                              & \multicolumn{1}{c|}{1.16\%}          & 23.96\%                             & 28.27\%                           & 1.12\%          & \multicolumn{1}{c|}{0.97\%}          & 50.55\%         & 58.70\%         & 4.83\%          & \multicolumn{1}{c|}{5.61\%}          & 48.89\%         & 54.90\%         \\ \hline
\end{tabular}}
\label{table:2}
\end{table*}


We conduct experiments on three real-world datasets, Table \ref{table:1} lists the statistics of three datasets.
In order to guarantee a fair comparison, we follow the settings of LightGCN \cite{he2020lightgcn} and randomly split the training set and test set. The test set is called Overall.
Since our DAP is expected to mitigate the popularity bias and improve the performance on the tail items, we specially split a subset of tail items from the whole test set, named Tail test set, in contrast to the Overall counterpart. In addition, we randomly split 20\% data from the training set as the validation set for tuning the hyper-parameters. Note that the same splitting strategy is applied to the validation set.


\subsubsection{Compared Methods}
To evaluate the debiasing performance on recommendation, we implement our DAP with the GCN-based recommender models LightGCN and UltraGCN to explore how our DAP improves the recommendation performance for GCNs. In addition, three methods for solving the popularity bias and two methods for improving tail node representations are compared:
\begin{itemize}[leftmargin=*]
\item \textbf{BFGCN} \cite{zhao2021bilateral}: This is a novel graph convolution filter for the user-item bipartite network to improve long-tail node representations.  
\item \textbf{UltraGCN} \cite{KelongMao2021UltraGCNUS}: This is a state-of-the-art method that achieves the best performance on the three datasets. It is an ultra-simplified formulation of GCN which skips explicit message passing and directly approximate the limit of infinite graph convolution layers.
\item \textbf{IPSCN} \cite{gruson2019offline}: This method adds max-capping and normalization on IPS \cite{schnabel2016recommendations} value to reduce the variance of IPS. IPS eliminates popularity bias by re-weighting each item according to its popularity.
\item \textbf{CausE} \cite{bonner2018causal}: It requires a large sample of biased data and a small sample of unbiased data. CausE adds a regularizer term on the discrepancy between the item vectors used to fit the biased sample and their counterpart representations that fit the unbiased sample. Because there is no unbiased data in our datasets, we adopt the sampling method in \cite{zheng2021disentangling} and obtain 20\% unbiased data from the training set.
\item \textbf{DICE} \cite{zheng2021disentangling}: DICE is a method to handle with the popularity bias problem by learning causal embeddings. It is a framework with causal-specific data to disentangle interest and popularity into two sets of embeddings. 
\item \textbf{MACR} \cite{wei2021model}: This is a state-of-the-art method to eliminate the popularity bias by counterfactual reasoning. It performs counterfactual inference to remove the effect of item popularity.
\item \textbf{BxQuAD} \cite{abdollahpouri2019managing}: BxQuAD is a typical post-hoc method for improving tail item recommendations. It suffers from the recommendation accuracy drop for controlling popular items. In this paper, we adopt the Binary-xQuAD method of the original paper and set the hyper-parameter $\lambda=0.9$.
\item \textbf{Tail} \cite{liu2021tail}: This method learns a neighborhood translation from head nodes, which can be further transferred to tail nodes to enhance their representations. It is devised for node classification and we transfer it to the field of recommendation.
\end{itemize}
We report the all-ranking performance \textit{w.r.t.} two metrics: Recall and NDCG cut at 20.

\begin{table*}[]
    \caption{Performance comparison between our method DAP and other counterparts on the Overall and Tail test sets. The `improve' is the relative improvement of UltraGCN-DAP-o over UltraGCN.}
\resizebox{\linewidth}{!}{
\begin{tabular}{c|cccc|cccc|cccc}
\hline
Dataset                 & \multicolumn{4}{c|}{Gowalla}                                                                                                                                             & \multicolumn{4}{c|}{Yelp2018}                                                                                                                                  & \multicolumn{4}{c}{Amazon-book}                                                                                                                                \\ \hline
\multirow{2}{*}{Models} & \multicolumn{2}{c|}{Overall}                                                                       & \multicolumn{2}{c|}{Tail}                                           & \multicolumn{2}{c|}{Overall}                                                             & \multicolumn{2}{c|}{Tail}                                           & \multicolumn{2}{c|}{Overall}                                                             & \multicolumn{2}{c}{Tail}                                            \\ \cline{2-13} 
                        & Recall                                & \multicolumn{1}{c|}{NDCG}                                  & Recall                           & NDCG                             & Recall                           & \multicolumn{1}{c|}{NDCG}                             & Recall                           & NDCG                             & Recall                           & \multicolumn{1}{c|}{NDCG}                             & Recall                           & NDCG                             \\ \hline
UltraGCN                & \uline{0.1862} & \multicolumn{1}{c|}{\uline{0.1579}} & 0.0447                           & 0.0213                           & \uline{0.0676}  & \multicolumn{1}{c|}{\uline{0.0554}}  & 0.0127                           & 0.0074                           & \uline{0.0682}  & \multicolumn{1}{c|}{\uline{0.0556}}  & 0.0436                           & 0.0297                           \\
UltraGCN-IPSCN          & 0.1345                                & \multicolumn{1}{c|}{0.1123}                                & 0.0451                           & 0.0208                           & 0.0401                           & \multicolumn{1}{c|}{0.0324}                           & 0.0144                           & 0.0087                           & 0.0442                           & \multicolumn{1}{c|}{0.0356}                           & 0.0458                           & 0.0317                           \\
UltraGCN-CausE          & 0.1408                                & \multicolumn{1}{c|}{0.1177}                                & 0.0449                           & 0.0209                           & 0.0411                           & \multicolumn{1}{c|}{0.0329}                           & 0.0151                           & 0.0096                           & 0.0459                           & \multicolumn{1}{c|}{0.0369}                           & 0.0463                           & 0.0320                           \\
UltraGCN-DICE           & 0.1424                                & \multicolumn{1}{c|}{0.1201}                                & 0.0512                           & 0.0247                           & 0.0516                           & \multicolumn{1}{c|}{0.0417}                           & 0.0157                           & 0.0096                           & 0.0545                           & \multicolumn{1}{c|}{0.0423}                           & 0.0491                           & 0.0343                           \\
UltraGCN-MACR           & 0.1311                                & \multicolumn{1}{c|}{0.1078}                                & 0.0517                           & 0.0252                           & 0.0387                           & \multicolumn{1}{c|}{0.0323}                           & \textbf{0.0248}  & \textbf{0.0141} & 0.0501                           & \multicolumn{1}{c|}{0.0398}                           & 0.0488                           & 0.0335                           \\
UltraGCN-Tail           & 0.1788                                & \multicolumn{1}{c|}{0.1521}                                & 0.0634                           & 0.0321                           & 0.0618                           & \multicolumn{1}{c|}{0.0501}                           & 0.0167                           & 0.0102                           & 0.0599                           & \multicolumn{1}{c|}{0.0499}                           & 0.0531                           & 0.0378                           \\
UltraGCN-BxQuAD         & 0.1482                                & \multicolumn{1}{c|}{0.1289}                                & \uline{0.0694}  & \uline{0.0361}  & 0.0591                           & \multicolumn{1}{c|}{0.0482}                           & 0.0218                           & \uline{0.0136}  & 0.0623                           & \multicolumn{1}{c|}{0.0517}                           & \textbf{0.0547} & \uline{0.0386}  \\ \hline
UltraGCN-DAP-o          & \textbf{0.1868}      & \multicolumn{1}{c|}{\textbf{0.1580}}      & 0.0551                           & 0.0271                           & \textbf{0.0678} & \multicolumn{1}{c|}{\textbf{0.0555}} & 0.0135                           & 0.0079                           & \textbf{0.0688} & \multicolumn{1}{c|}{\textbf{0.0562}} & 0.0462                           & 0.0316                           \\
UltraGCN-DAP-t          & 0.1701                                & \multicolumn{1}{c|}{0.1483}                                & \textbf{0.0714} & \textbf{0.0362} & 0.0607                           & \multicolumn{1}{c|}{0.0493}                           & \uline{0.0237} & 0.0135                           & 0.0625                           & \multicolumn{1}{c|}{0.0520}                           & \uline{0.0543}  & \textbf{0.0391} \\ \hline
improve                 & 0.32\%                                & \multicolumn{1}{c|}{0.06\%}                                & 5.59\%                           & 6.57\%                           & 0.30\%                           & \multicolumn{1}{c|}{0.18\%}                           & 6.30\%                           & 6.76\%                           & 0.88\%                           & \multicolumn{1}{c|}{1.07\%}                           & 5.96\%                           & 6.40\%                           \\ \hline
\end{tabular}}
\label{table:3}
\end{table*}

\subsubsection{Hyper-parameter Settings} 
For a fair comparison, all compared GCN-based models are implemented with 3 layers except for the UltraGCN. We optimize all models with Adam \cite{kingma2014adam} with batch size as 4096. For our method, the number of clusters $P$ is searched in $\{1, 5, 10, 20, 30, \cdots, 70\}$. Note that we keep the same $P$ in each layer when operating $Kmeans$. The hyper-parameters $\alpha$ and $\beta$ in Eq. \eqref{220207bias-estimating} are tuned in the range of [0, 2.0] with step of 0.1.


\subsection{Recommendation Performance (RQ1)}
We compare all methods on the Overall and Tail test sets in Tables \ref{table:2} and \ref{table:3}, where the hyper-parameters of DAP-t and DAP-o are tuned to the best on the Tail and Overall validation sets, respectively. The promotions reported in Tables \ref{table:2} and \ref{table:3} are calculated by comparing LightGCN-DAP-o (UltraGCN-DAP-o) with LightGCN (UltraGCN). In general, our DAP significantly boosts two GCN methods on the Tail test set. The main observations are as follows:
\begin{itemize}[leftmargin=*]
    \item In all cases, our DAP-o brings performance gain in Recall and NDCG for LightGCN on the Overall and Tail test sets while other baselines only boost LightGCN on the Tail test set. These comparison methods mainly rely on suppressing popular items in exchange for the promotion of tail items. However, our method revises the node representations by intervening the popularity bias based on theoretical analyses on GCNs. It is more applicable to GCN-based backbones.
    \item In terms of the performance on the Tail test set, our DAP-t has a significant improvement over LightGCN, the average improvements of \\ LightGCN-DAP-t over LightGCN on the three datasets are 95.71\% on Recall and 121.92\% on NDCG, respectively. In the same time, the performance on the Overall test set only has a small drop. Although some competitive baselines outperform DAP-t on some metrics, DAP demonstrates stronger comprehensive abilities on different test sets. 
    \item  In order to further verify the effectiveness of our method compared to LightGCN, we show the performance comparison at each layer in Figure \ref{original_comp}. Overall, it can be seen that DAP boosts LightGCN stably layer by layer. Particularly, on the Amazon-book, when LightGCN degrades the performance as the graph convolution goes deeper, DAP has no accuracy drop. It indicates that the effectiveness of our debiasing method for improving the representations of nodes.
    \item For the method UltraGCN, we implement our DAP on it. Because UltraGCN directly uses infinite layers of graph convolution, we only can debias on its final representations. In Table \ref{table:3}, compared with other baselines, our method shows a similar trend to Table \ref{table:2}. It validates the effectiveness of our method. It should be noted that the improvement is relatively small on the Tail test set compared to that on LightGCN. This is mainly because we can not debias at each layer of graph convolution and obtain the most ideal representations.
\end{itemize}
To conclude, DAP can effectively improve the performance of tail items for GCN backbones and outperforms baselines in general.

\begin{figure*}[!]
\centering
\subfigure
   {\includegraphics[width=0.4\linewidth]{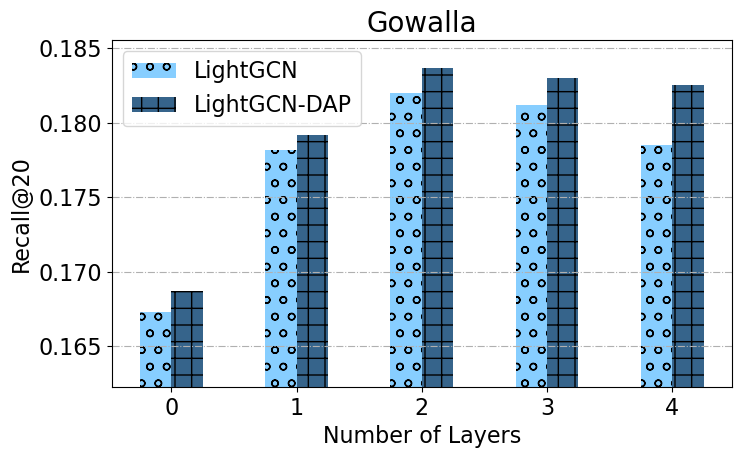}}
  {\includegraphics[width=0.4\linewidth]{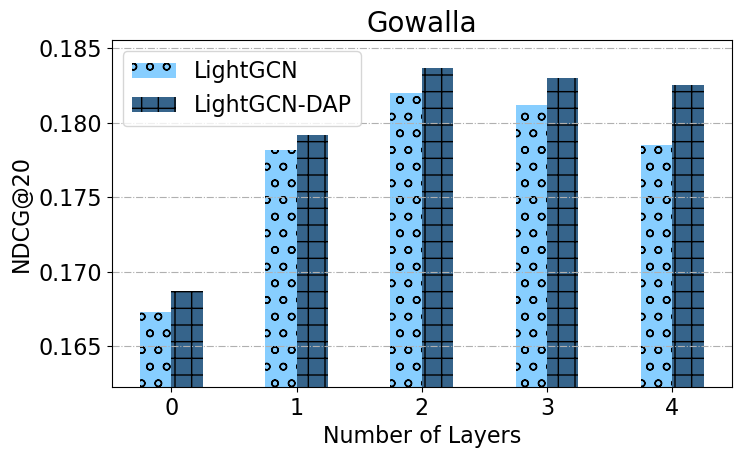}}
\subfigure
    {\includegraphics[width=0.4\linewidth]{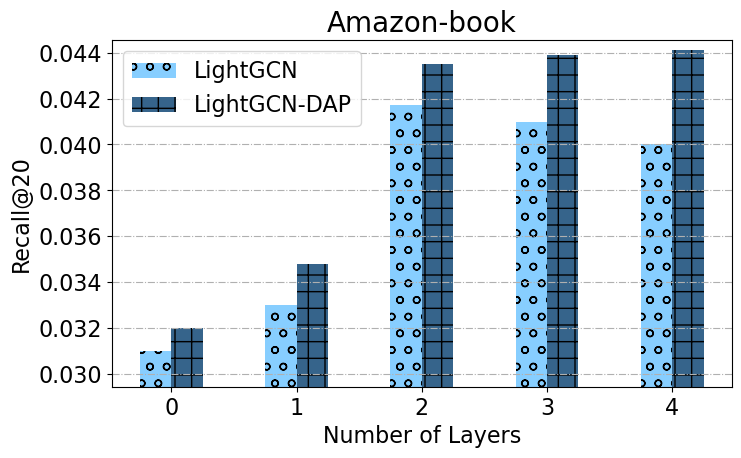}}
    {\includegraphics[width=0.4\linewidth]{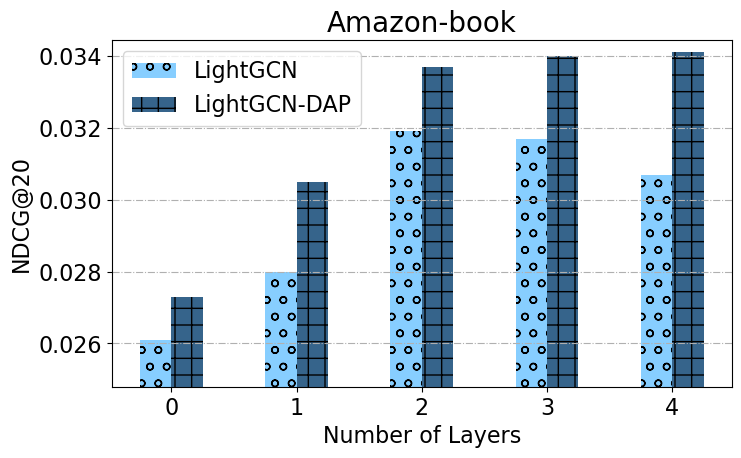}}
\caption{Performance comparison between LightGCN and LightGCN-DAP with different layers of graph convolution on the Overall test set.}
\label{original_comp}
\end{figure*}

\begin{figure*}
\centering
    {\includegraphics[width=0.7\linewidth]{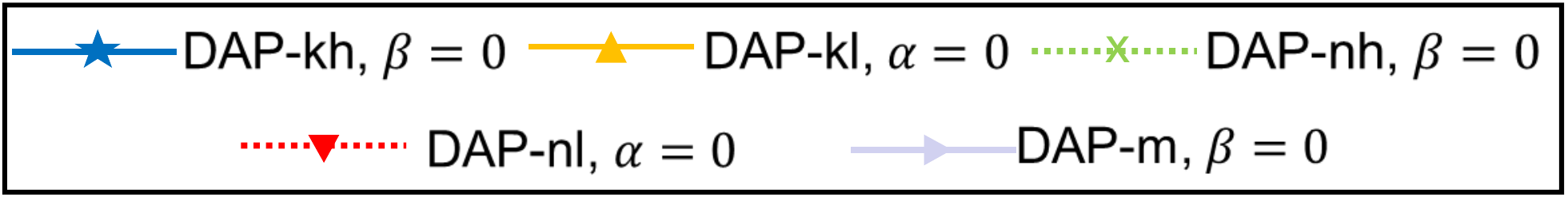}}
    {\includegraphics[width=0.3\linewidth]{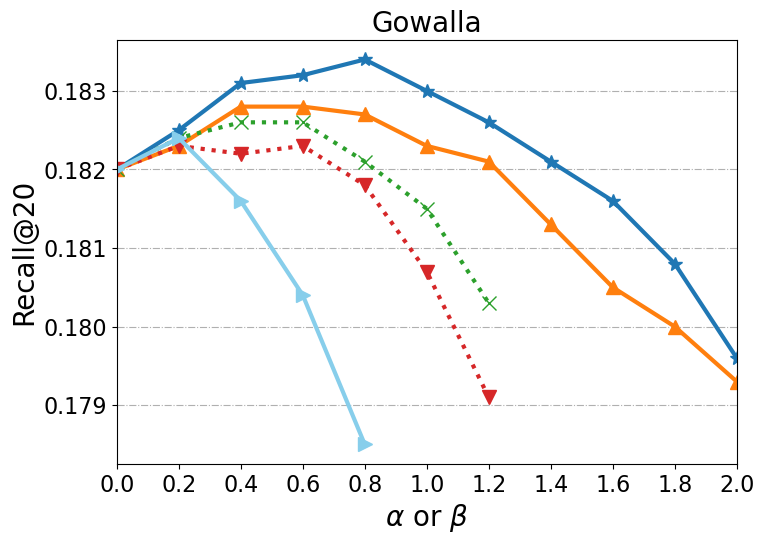}}
    {\includegraphics[width=0.3\linewidth]{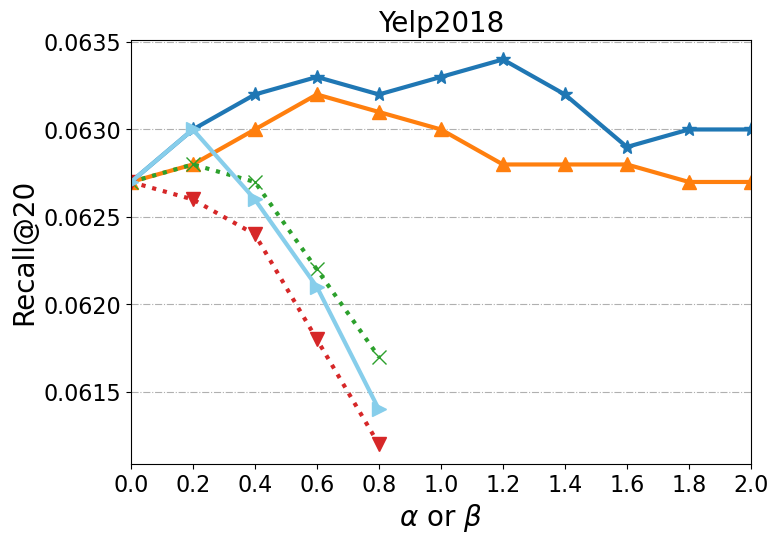}}
    {\includegraphics[width=0.3\linewidth]{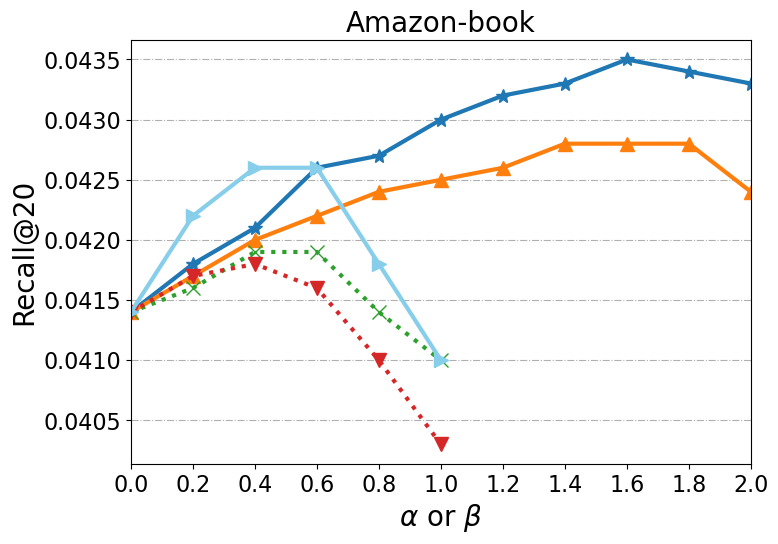}}
    {\includegraphics[width=0.3\linewidth]{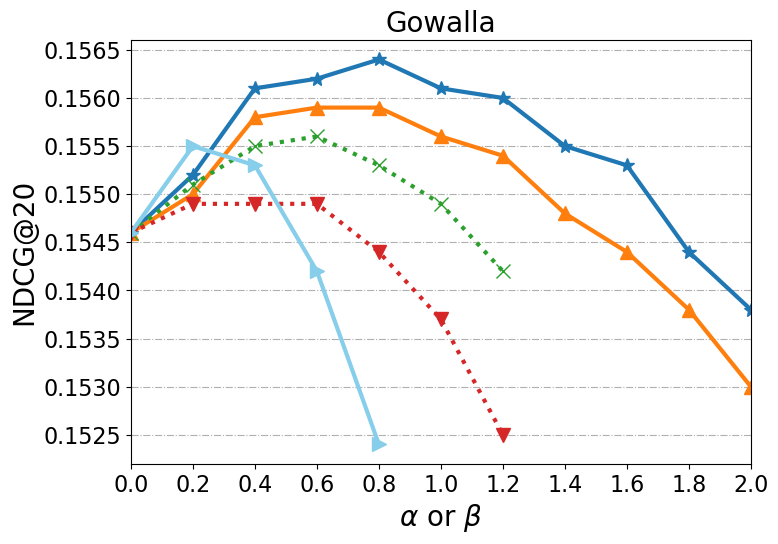}}
    {\includegraphics[width=0.3\linewidth]{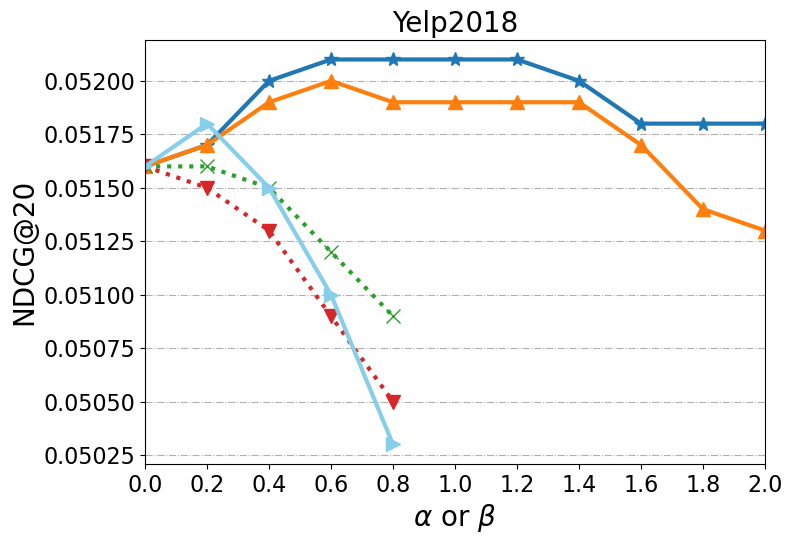}}
    {\includegraphics[width=0.3\linewidth]{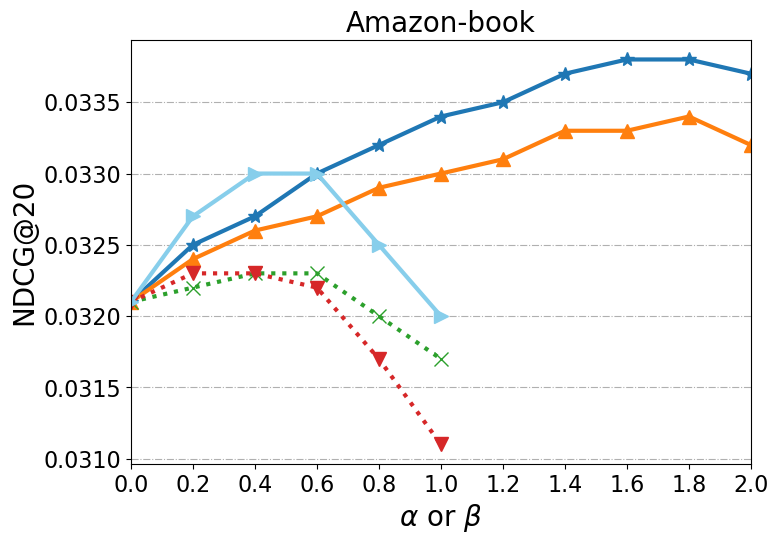}}
\caption{Ablation study of DAP with different hyper-parameters $\alpha$ and $\beta$ on the Overall test set.}  
\label{ablation}
\end{figure*}

\subsection{Ablation Study (RQ2)}

\subsubsection{Performance of Variants.}
To evaluate the amplification effect brought by higher-degree nodes and lower-degree nodes, we design five variants of DAP implemented on LightGCN including: (1) DAP-kh: this variant is derived by setting $\beta=0$ in Eq. \eqref{220207bias-estimating} which aims to evaluate the amplification effect of higher-degree nodes; (2) DAP-kl: this variant is obtained by setting $\alpha=0$ for evaluating the effect of lower-degree nodes; (3) DAP-nh: this variant treats one-order neighbors of the target node as its cluster instead of using $Kmeans$ and sets $\beta=0$ for exploring the effect of higher-degree neighbors; (4) DAP-nl: this variant is the opposite of DAP-nh, using one-order lower-degree neighbors (\textit{i.e.,} $\alpha=0$); (5) DAP-m: it is derived by removing the similarity calculation function $\mathcal{M}$ in Eq. (\ref{220207bias-estimating}) and is without considering the effect of the lower-degree part (\textit{i.e.}, $\beta=0$). 

For each variant, we need only tune $\alpha$ or $\beta$. For example, we adjust $\alpha$ for DAP-kh from 0 to 1 on the Gowalla dataset. Figure \ref{ablation} shows the results on the Overall test set and we have the following observations: (1) on the three datasets, DAP-kh achieves the best performance. It reflects that higher-degree nodes have the greater impact on other nodes than lower-degree ones. (2) DAP-kh and DAP-kl have a better performance than DAP-nh and DAP-nl on the three datasets. It indicates that estimating the amplified bias among one-order neighbors is not enough and may introduce noise. $Kmeans$ can automatically help discover the highly-influential nodes. In this way, we would estimate the amplified bias more accurately. (3) Compared to the performance of DAP-kh and DAP-m, it can be found that DAP-kh outperforms DAP-m. It reflects that the similarity function $\mathcal{M}$ could capture the relation strength among nodes and thus help estimate the amplified bias well.

\subsubsection{Effect of Different Hyper-Parameters $\alpha$ and $\beta$.}
Tuning $\alpha$ and $\beta$ is important for the performance at the inference stage. We plot the performance under different $\alpha$ and $\beta$ settings on the three datasets in Figure \ref{ablation}. It can be observed that the performance on the Overall test set increases gradually and then decreases while LightGCN is for $\alpha=0$ and $\beta=0$. This is to say, the popularity bias is not completely harmful. Directly eliminating all the bias is not reasonable in recommendation, which conforms with the finding of \cite{zhao2021popularity}.

\subsubsection{Effect of Different Hyper-Parameter $P$.}
Table \ref{table:4} reports our experimental results on the three datasets \textit{w.r.t.} the hyper-parameter $P$. As can be seen, the performance increases as $P$ becomes large and then decreases. For different datasets, when $P$ is too small or large, the popularity bias among nodes can not be captured accurately, therefore the performance is unsatisfactory. $Kmeans$ is a simple tool for clustering, other advanced unsupervised clustering tools can be explored for improving the performance.

\begin{table}[]
\caption{Effect of $P$ over DAP for the three datasets on the Overall test set.}
\resizebox{\linewidth}{!}{
\begin{tabular}{c|cc|cc|cc}
\hline
Dataset & \multicolumn{2}{c|}{Gowalla} & \multicolumn{2}{c|}{Yelp2018} & \multicolumn{2}{l}{Amazon-book} \\ \hline
P       & Recall        & NDCG         & Recall        & NDCG          & Recall         & NDCG           \\ \hline
1       & 0.1819        & 0.1547       & 0.0629        & 0.0517        & 0.0423         & 0.0329         \\ \hline
5       & 0.182         & 0.1548       & 0.063         & 0.0518        & 0.0436         & 0.0339         \\ \hline
10      & 0.1823        & 0.1552       & 0.0633        & 0.0521        & 0.0435         & 0.0339         \\ \hline
30      & 0.1831        & 0.1563       & 0.0628        & 0.0518        & 0.0435         & 0.0338         \\ \hline
50      & 0.1834        & 0.1564       & 0.0626        & 0.0517        & 0.0428         & 0.0334         \\ \hline
70      & 0.1833        & 0.1564       & 0.0623        & 0.0515        & 0.0426         & 0.0331         \\ \hline
\end{tabular}}
\label{table:4}
\end{table}

\begin{figure*}
\centering
\subfigure
    {\includegraphics[width=0.32\linewidth]{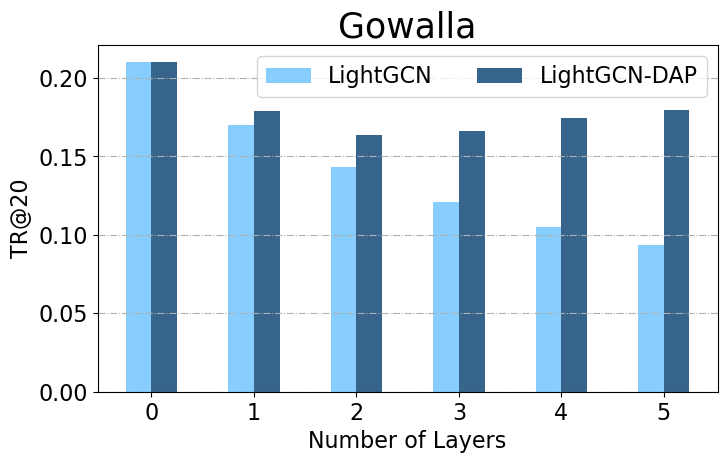}}
\subfigure
    {\includegraphics[width=0.32\linewidth]{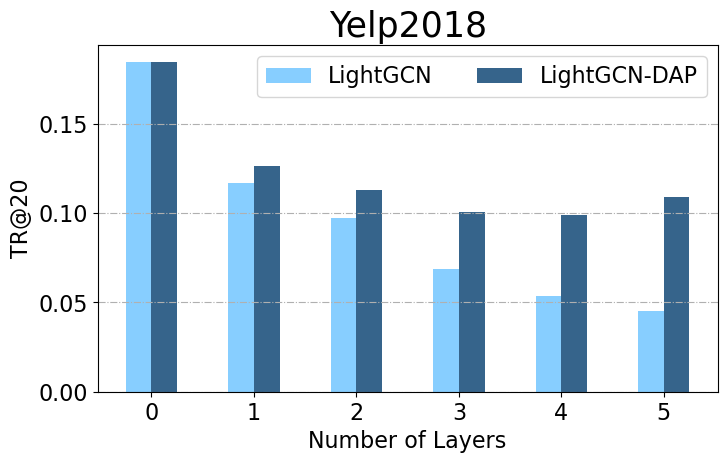}}
\subfigure
    {\includegraphics[width=0.32\linewidth]{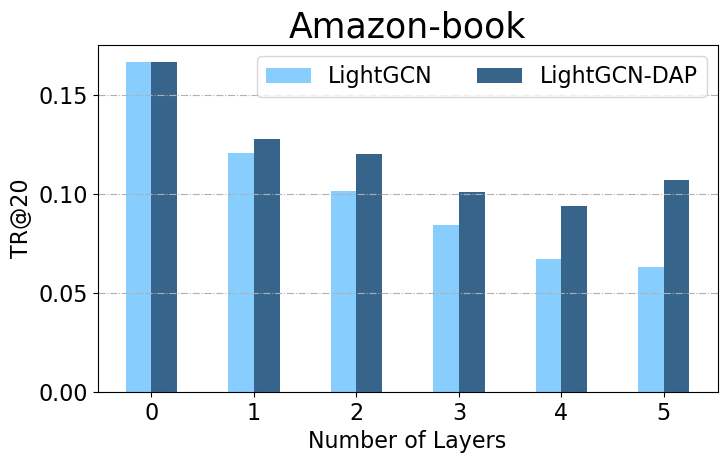}}
\caption{DAP can effectively alleviate the popularity bias.}
\label{alleviate}
\end{figure*}

\subsection{Alleviating Popularity Bias (RQ3)}
We already discuss that GCNs amplify the popularity bias in the introduction. In this part, we would verify that our debiasing framework can mitigate the amplification of popularity bias issue. 
The result of TR@20 (the ratio of recommending tail items cut at 20) is shown in Figure \ref{alleviate}. As can be seen, when the TR@20 result of LightGCN significantly decreases with more graph convolution layers, DAP could restraint continuous worsening of the popularity bias and gradually improve the TR@20 on the three datasets. It means that the tail items can be recommended more when GCNs go deeper, and thus the popularity bias is effectively alleviated by DAP. In addition, combining Figures \ref{original_comp} and \ref{alleviate} gives the conclusion that DAP can not only promote the overall performance, but also can alleviate the popularity bias at each layer.

\section{Related work}
In this section, we review the research topics related to our work: recommendation debiasing and GCNs debiasing in classification.

\subsection{Recommendation Debiasing}
Recommendation debiasing is a recently emerged research topic \cite{chen2020bias} focusing on various biases in recommendation, for example, popularity bias \cite{zhu2021popularity}\cite{xv2022neutralizing}\cite{gupta2021causer}, exposure bias \cite{gupta2021correcting}, position bias \cite{guo2019pal}, etc. Many methods have been explored to analyze and alleviate the popularity bias in recommender systems. For example, \cite{liang2016causal, schnabel2016recommendations} propose inverse propensity score (IPS) methods that reweights the interactions to debias the popularity bias in the training loss. However, these methods are difficult to tune because estimating the propensity score is challenging. \cite{dudik2011doubly} proposes another method which combines data imputation and IPS to jointly learn an unbiased recommender. However, it is a brute manner for improving the long-tail items with a huge performance drop on the recommendation accuracy. Other empirical approaches such as adversarial learning \cite{krishnan2018adversarial}, meta learning \cite{chen2021autodebias} and transfer learning \cite{zhang2021model} are developed without estimating the propensity weight. Ranking adjustment is an another line of research to solve the popularity bias issue \cite{abdollahpouri2017controlling, abdollahpouri2019managing}.   

Recently, methods based on causal inference have been widely applied to solve bias issues in recommendation, for example, MACR \cite{wei2021model} and DICE \cite{zheng2021disentangling}. And \cite{zhang2021causal} proposes a causal framework to leverage the popularity bias in recommendation. Other causal methods for learning an unbiased recommender \cite{wang2018deconfounded,bonner2018causal,sato2020unbiased,qiu2021causalrec,wang2021deconfounded} are proposed to takle various bias issues. We refer the readers to a system survey for more details \cite{chen2023bias}.

\subsection{GCNs Debiasing in Classification}
Recently, GCN-based models have demonstrated promising performance and advancements on classification tasks. However, some recent studies also reveal various issues of GCNs including over-smoothing, vulnerability and degree-related biases \cite{dai2018adversarial, xu2018powerful,li2018deeper,chen2020measuring,tang2020investigating}. The over-smoothing issue can be treated as a kind of bias in GCNs, which means the node representations are inclined to be distinguishable, and degrades the performance with many graph convolution layers. Some methods are proposed to tackle this problem. \cite{chen2020simple} proposes GCNII which extends the vanilla GCN model with initial residual connection and identity mapping to prevent the over-smoothing.
\cite{chen2020measuring} proposes two methods including adding a regularizer to the training objective and optimizing the graph topology based on the model predictions. Nevertheless, \cite{liu2020towards} argues that the over-smoothing issue only happens after a large number of iterations and the current results with several layers of graph convolution are relatively far from the ideal over-smoothing situation. For recommendation, one node can reach all the other nodes by stacking 7 layers \cite{liu2021interest}. In this regard, we could treat the nodes clustering issue with shallow graph convolution layers and over-smoothing with a large number layers differently. 

Besides, \cite{tang2020investigating} points out that GCNs are biased towards higher-degree nodes with higher accuracy than those lower-degree nodes. The authors analyze this issue and argue that nodes with low degrees tend to have very few labeled nodes, which results in sub-optimal performance on low-degree nodes. Therefore, \cite{tang2020investigating} proposes a method exploiting pseudo labels to enhance the representations for low-degree nodes. \cite{sun2020multi} also proposes a training algorithm adding confident data with virtual labels to the labeled set to enlarge the training set based on self-training. \cite{liu2021tail} proposes a method which learns a neighborhood translation from head nodes to tail nodes. In this way, the representations of tail nodes can be enhanced for improving the performance of tail nodes.  

In this work, we analyze the bias existing in GCN-based recommenders that high-degree nodes and low-degree nodes influence each other. It is different from the bias in classification introduced in \cite{tang2020investigating}. And our method is different with these methods based on producing pseudo labels and translation. 

\section{Conclusion and future work}
In this paper, we first theoretically analyzed how GCN-based recommenders amplify the popularity bias. We show that popular items tend to dominate the updating information of neighbor users in the training stage, which makes user position closer to popular items. After multiple times of neighborhood aggregation, popular items would become more influential by affecting more high-order neighbor users. Based on the above insights, we propose a simple yet generic debiasing framework DAP. Our method is implemented in each graph convolution layer in the inference stage by intervening the amplification effect on nodes in the representation space. Extensive experiments on the three real-world datasets justify the effectiveness of DAP. Our method could promote the recommendation performance on tail items and alleviate the popularity bias for GCN backbone models. 

In the future, we will explore and theoretically analyze more problems hidden in graph-based recommendation methods. 
In addition, various biases also exist in recommender systems that are harmful to users and need to be solved, such as position bias and exposure bias. 
It will be meaningful to propose a universal solution to solve various biases.

\bibliographystyle{fcs}
\bibliography{fcs}

\begin{thebibliography}{10}

\bibitem{wu2022surveyaccuracy}
Wu~L, He~X, Wang X, Zhang K, Wang M.
\newblock A survey on accuracy-oriented neural recommendation: From
  collaborative filtering to information-rich recommendation.
\newblock IEEE Transactions on Knowledge and Data Engineering, 2022,  1--1

\bibitem{chen2023bias}
Chen J, Dong H, Wang X, Feng F, Wang M, He~X.
\newblock Bias and debias in recommender system: A survey and future
  directions.
\newblock ACM Transactions on Information Systems, 2023, 41(3): 1--39

\bibitem{wang2019neural}
Wang X, He~X, Wang M, Feng F, Chua T~S.
\newblock Neural graph collaborative filtering.
\newblock In: Proceedings of the 42nd international ACM SIGIR conference on
  Research and development in Information Retrieval.
\newblock 2019,  165--174

\bibitem{wu2019neural}
Wu~L, Sun P, Fu~Y, Hong R, Wang X, Wang M.
\newblock A neural influence diffusion model for social recommendation.
\newblock In: Proceedings of the 42nd international ACM SIGIR conference on
  research and development in information retrieval.
\newblock 2019,  235--244

\bibitem{wu2019simplifying}
Wu~F, Souza A, Zhang T, Fifty C, Yu~T, Weinberger K.
\newblock Simplifying graph convolutional networks.
\newblock In: International conference on machine learning.
\newblock 2019,  6861--6871

\bibitem{he2020lightgcn}
He~X, Deng K, Wang X, Li~Y, Zhang Y, Wang M.
\newblock Lightgcn: Simplifying and powering graph convolution network for
  recommendation.
\newblock In: Proceedings of the 43th International ACM SIGIR Conference on
  Research and Development in Information Retrieval, SIGIR.
\newblock 2020

\bibitem{liu2021interest}
Liu F, Cheng Z, Zhu L, Gao Z, Nie L.
\newblock Interest-aware message-passing gcn for recommendation.
\newblock In: Proceedings of the Web Conference 2021.
\newblock 2021,  1296--1305

\bibitem{SteffenRendle2009BPRBP}
Rendle S, Freudenthaler C, Gantner Z, Schmidt-Thieme L.
\newblock Bpr: Bayesian personalized ranking from implicit feedback.
\newblock In: Proceedings of the Twenty-Fifth Conference on Uncertainty in
  Artificial Intelligence.
\newblock 2009,  452--461

\bibitem{wei2021model}
Wei T, Feng F, Chen J, Wu~Z, Yi~J, He~X.
\newblock Model-agnostic counterfactual reasoning for eliminating popularity
  bias in recommender system.
\newblock In: Proceedings of the 27th ACM SIGKDD Conference on Knowledge
  Discovery \& Data Mining.
\newblock 2021,  1791--1800

\bibitem{zheng2021disentangling}
Zheng Y, Gao C, Li~X, He~X, Li~Y, Jin D.
\newblock Disentangling user interest and conformity for recommendation with
  causal embedding.
\newblock In: Proceedings of the Web Conference 2021.
\newblock 2021,  2980--2991

\bibitem{zhang2021causal}
Zhang Y, Feng F, He~X, Wei T, Song C, Ling G, Zhang Y.
\newblock Causal intervention for leveraging popularity bias in recommendation.
\newblock In: Proceedings of the 44th International ACM SIGIR Conference on
  Research and Development in Information Retrieval.
\newblock 2021,  11--20

\bibitem{liang2016causal}
Liang D, Charlin L, Blei D~M.
\newblock Causal inference for recommendation.
\newblock In: Causation: Foundation to Application, Workshop at UAI.
\newblock 2016

\bibitem{abdollahpouri2019managing}
Abdollahpouri H, Burke R, Mobasher B.
\newblock Managing popularity bias in recommender systems with personalized
  re-ranking.
\newblock In: The Thirty-second International Flairs Conference.
\newblock 2019

\bibitem{KelongMao2021UltraGCNUS}
Mao K, Zhu J, Xiao X, Lu~B, Wang Z, He~X.
\newblock Ultragcn: ultra simplification of graph convolutional networks for
  recommendation.
\newblock In: Proceedings of the 30th ACM International Conference on
  Information \& Knowledge Management.
\newblock 2021,  1253--1262

\bibitem{kipf2016semi}
Kipf T~N, Welling M.
\newblock Semi-supervised classification with graph convolutional networks.
\newblock In: International Conference on Learning Representations.
\newblock 2017

\bibitem{ChristopherJohnson2022LogisticMF}
Johnson C.
\newblock Logistic matrix factorization for implicit feedback data.
\newblock In: Advances in Neural Information Processing Systems.
\newblock 2014,  1--9

\bibitem{koh2017understanding}
Koh P~W, Liang P.
\newblock Understanding black-box predictions via influence functions.
\newblock In: International Conference on Machine Learning.
\newblock 2017,  1885--1894

\bibitem{xu2018representation}
Xu~K, Li~C, Tian Y, Sonobe T, Kawarabayashi K~i, Jegelka S.
\newblock Representation learning on graphs with jumping knowledge networks.
\newblock In: International Conference on Machine Learning.
\newblock 2018,  5453--5462

\bibitem{tang2020investigating}
Tang X, Yao H, Sun Y, Wang Y, Tang J, Aggarwal C, Mitra P, Wang S.
\newblock Investigating and mitigating degree-related biases in graph
  convoltuional networks.
\newblock In: Proceedings of the 29th ACM International Conference on
  Information \& Knowledge Management.
\newblock 2020,  1435--1444

\bibitem{wu2022effectiveness}
Wu~J, Wang X, Gao X, Chen J, Fu~H, Qiu T, He~X.
\newblock On the effectiveness of sampled softmax loss for item recommendation.
\newblock arXiv preprint arXiv:2201.02327, 2022

\bibitem{zhao2021popularity}
Zhao Z, Chen J, Zhou S, He~X, Cao X, Zhang F, Wu~W.
\newblock Popularity bias is not always evil: Disentangling benign and harmful
  bias for recommendation.
\newblock IEEE Transactions on Knowledge \& Data Engineering, 2022, (01): 1--13

\bibitem{zhao2021bilateral}
Zhao M, Deng Q, Wang K, Wu~R, Tao J, Fan C, Chen L, Cui P.
\newblock Bilateral filtering graph convolutional network for multi-relational
  social recommendation in the power-law networks.
\newblock ACM Transactions on Information Systems (TOIS), 2021, 40(2): 1--24

\bibitem{gruson2019offline}
Gruson A, Chandar P, Charbuillet C, McInerney J, Hansen S, Tardieu D,
  Carterette B.
\newblock Offline evaluation to make decisions about playlistrecommendation
  algorithms.
\newblock In: Proceedings of the Twelfth ACM International Conference on Web
  Search and Data Mining.
\newblock 2019,  420--428

\bibitem{schnabel2016recommendations}
Schnabel T, Swaminathan A, Singh A, Chandak N, Joachims T.
\newblock Recommendations as treatments: Debiasing learning and evaluation.
\newblock In: international conference on machine learning.
\newblock 2016,  1670--1679

\bibitem{bonner2018causal}
Bonner S, Vasile F.
\newblock Causal embeddings for recommendation.
\newblock In: Proceedings of the 12th ACM conference on recommender systems.
\newblock 2018,  104--112

\bibitem{liu2021tail}
Liu Z, Nguyen T~K, Fang Y.
\newblock Tail-gnn: Tail-node graph neural networks.
\newblock In: Proceedings of the 27th ACM SIGKDD Conference on Knowledge
  Discovery \& Data Mining.
\newblock 2021,  1109--1119

\bibitem{kingma2014adam}
Kingma D~P, Ba~J.
\newblock Adam: A method for stochastic optimization.
\newblock In: International Conference on Learning Representations.
\newblock 2015

\bibitem{zhu2021popularity}
Zhu Z, He~Y, Zhao X, Caverlee J.
\newblock Popularity bias in dynamic recommendation.
\newblock In: Proceedings of the 27th ACM SIGKDD Conference on Knowledge
  Discovery \& Data Mining.
\newblock 2021,  2439--2449

\bibitem{xv2022neutralizing}
Xv~G, Lin C, Li~H, Su~J, Ye~W, Chen Y.
\newblock Neutralizing popularity bias in recommendation models.
\newblock In: Proceedings of the 45th International ACM SIGIR Conference on
  Research and Development in Information Retrieval.
\newblock 2022,  2623--2628

\bibitem{gupta2021causer}
Gupta P, Sharma A, Malhotra P, Vig L, Shroff G.
\newblock Causer: Causal session-based recommendations for handling popularity
  bias.
\newblock In: Proceedings of the 30th ACM International Conference on
  Information \& Knowledge Management.
\newblock 2021,  3048--3052

\bibitem{gupta2021correcting}
Gupta S, Wang H, Lipton Z, Wang Y.
\newblock Correcting exposure bias for link recommendation.
\newblock In: International Conference on Machine Learning.
\newblock 2021,  3953--3963

\bibitem{guo2019pal}
Guo H, Yu~J, Liu Q, Tang R, Zhang Y.
\newblock Pal: a position-bias aware learning framework for ctr prediction in
  live recommender systems.
\newblock In: Proceedings of the 13th ACM Conference on Recommender Systems.
\newblock 2019,  452--456

\bibitem{dudik2011doubly}
Dud{\'\i}k M, Langford J, Li~L.
\newblock Doubly robust policy evaluation and learning.
\newblock In: Proceedings of the 28th International Conference on International
  Conference on Machine Learning.
\newblock 2011,  1097--1104

\bibitem{krishnan2018adversarial}
Krishnan A, Sharma A, Sankar A, Sundaram H.
\newblock An adversarial approach to improve long-tail performance in neural
  collaborative filtering.
\newblock In: Proceedings of the 27th ACM International Conference on
  Information and Knowledge Management.
\newblock 2018,  1491--1494

\bibitem{chen2021autodebias}
Chen J, Dong H, Qiu Y, He~X, Xin X, Chen L, Lin G, Yang K.
\newblock Autodebias: Learning to debias for recommendation.
\newblock In: Proceedings of the 44th International ACM SIGIR Conference on
  Research and Development in Information Retrieval.
\newblock 2021,  21--30

\bibitem{zhang2021model}
Zhang Y, Cheng D~Z, Yao T, Yi~X, Hong L, Chi E~H.
\newblock A model of two tales: Dual transfer learning framework for improved
  long-tail item recommendation.
\newblock In: Proceedings of the Web Conference 2021.
\newblock 2021,  2220--2231

\bibitem{abdollahpouri2017controlling}
Abdollahpouri H, Burke R, Mobasher B.
\newblock Controlling popularity bias in learning-to-rank recommendation.
\newblock In: Proceedings of the eleventh ACM conference on recommender
  systems.
\newblock 2017,  42--46

\bibitem{wang2018deconfounded}
Wang Y, Liang D, Charlin L, Blei D~M.
\newblock The deconfounded recommender: {A} causal inference approach to
  recommendation.
\newblock CoRR, 2018, abs/1808.06581

\bibitem{sato2020unbiased}
Sato M, Takemori S, Singh J, Ohkuma T.
\newblock Unbiased learning for the causal effect of recommendation.
\newblock In: Fourteenth ACM Conference on Recommender Systems.
\newblock 2020,  378--387

\bibitem{qiu2021causalrec}
Qiu R, Wang S, Chen Z, Yin H, Huang Z.
\newblock Causalrec: Causal inference for visual debiasing in visually-aware
  recommendation.
\newblock In: Proceedings of the 29th ACM International Conference on
  Multimedia.
\newblock 2021,  3844--3852

\bibitem{wang2021deconfounded}
Wang W, Feng F, He~X, Wang X, Chua T~S.
\newblock Deconfounded recommendation for alleviating bias amplification.
\newblock In: Proceedings of the 27th ACM SIGKDD Conference on Knowledge
  Discovery \& Data Mining.
\newblock 2021,  1717--1725

\bibitem{dai2018adversarial}
Dai H, Li~H, Tian T, Huang X, Wang L, Zhu J, Song L.
\newblock Adversarial attack on graph structured data.
\newblock In: International conference on machine learning.
\newblock 2018,  1115--1124

\bibitem{xu2018powerful}
Xu~K, Hu~W, Leskovec J, Jegelka S.
\newblock How powerful are graph neural networks?
\newblock In: International Conference on Learning Representations.
\newblock 2018

\bibitem{li2018deeper}
Li~Q, Han Z, Wu~X~M.
\newblock Deeper insights into graph convolutional networks for semi-supervised
  learning.
\newblock In: Proceedings of the 32th AAAI Conference on Artificial
  Intelligence, AAAI.
\newblock 2018

\bibitem{chen2020measuring}
Chen D, Lin Y, Li~W, Li~P, Zhou J, Sun X.
\newblock Measuring and relieving the over-smoothing problem for graph neural
  networks from the topological view.
\newblock In: Proceedings of the AAAI Conference on Artificial Intelligence.
\newblock 2020,  3438--3445

\bibitem{chen2020simple}
Chen M, Wei Z, Huang Z, Ding B, Li~Y.
\newblock Simple and deep graph convolutional networks.
\newblock In: International Conference on Machine Learning.
\newblock 2020,  1725--1735

\bibitem{liu2020towards}
Liu M, Gao H, Ji~S.
\newblock Towards deeper graph neural networks.
\newblock In: Proceedings of the 26th ACM SIGKDD International Conference on
  Knowledge Discovery \& Data Mining.
\newblock 2020,  338--348

\bibitem{sun2020multi}
Sun K, Lin Z, Zhu Z.
\newblock Multi-stage self-supervised learning for graph convolutional networks
  on graphs with few labeled nodes.
\newblock In: Proceedings of the AAAI Conference on Artificial Intelligence.
\newblock 2020,  5892--5899

\end{thebibliography}

\end{document}